\providecommand{\U}[1]{\protect\rule{.1in}{.1in}}
\newtheorem{theorem}{Theorem}
\newtheorem{lemma}[theorem]{Lemma}
\newtheorem{proposition}[theorem]{Proposition}
\newenvironment{proof}[1][Proof]{\noindent\textbf{#1.} }{\ \rule{0.5em}{0.5em}}
\begin{document}

\title{Polar codes for degradable quantum channels}
\author{Mark M. Wilde and Saikat Guha\thanks{Mark M. Wilde is with the School of
Computer Science, McGill University, Montreal, Quebec H3A 2A7, Canada. Saikat
Guha is with the Quantum Information Processing Group,
Raytheon BBN Technologies, Cambridge, Massachusetts, USA 02138. (E-mail:
mark.wilde@mcgill.ca; sguha@bbn.com)}}
\maketitle

\begin{abstract}
Channel polarization is a phenomenon in which a particular recursive encoding
induces a set of synthesized channels from many instances of a memoryless
channel, such that a fraction of the synthesized channels becomes near perfect
for data transmission and the other fraction becomes near useless for this
task. Mahdavifar and Vardy have recently exploited this phenomenon to
construct codes that achieve the symmetric private capacity for private data
transmission over a degraded wiretap channel. In the current paper, we build
on their work and demonstrate how to construct quantum wiretap polar codes
that achieve the symmetric private capacity of a degraded quantum wiretap
channel with a classical eavesdropper. Due to the Schumacher-Westmoreland
correspondence between quantum privacy and quantum coherence, we can construct
quantum polar codes by operating these quantum wiretap polar codes in
superposition, much like Devetak's technique for demonstrating the
achievability of the coherent information rate for quantum data transmission.
Our scheme achieves the symmetric coherent information rate for quantum
channels that are degradable with a classical environment. This condition on
the environment may seem restrictive, but we show that many quantum channels
satisfy this criterion, including amplitude damping channels, photon-detected
jump channels, dephasing channels, erasure channels, and cloning channels. Our
quantum polar coding scheme has the desirable properties of being
channel-adapted and symmetric capacity-achieving along with having an
efficient encoder, but we have not demonstrated that the decoding is
efficient. Also, the scheme may require entanglement assistance, but we show
that the rate of entanglement consumption vanishes in the limit of large
blocklength if the channel is degradable with classical environment.

\end{abstract}

\section{Introduction}

In a seminal paper on quantum error correction, Shor set out the
\textquotedblleft goal of [defining] the quantum analog of the Shannon
capacity \cite{S48}\ for a quantum channel, and [finding] encoding schemes
which approach this capacity\textquotedblright~\cite{S95}. At the time, it was
not really clear how to define the quantum capacity of a quantum channel, but
Shor's quantum error correction code \cite{S95} gave some clues for
constructing more general encoding schemes. Subsequently, several authors
contributed increasingly sophisticated quantum error correction codes
\cite{CS96,S96,G97}\ and others established a good definition of and upper
bounds on the quantum capacity of a channel
\cite{PhysRevA.54.2614,PhysRevA.54.2629,BKN98,BNS98}, culminating in some
high-performing quantum error-correction codes \cite{MMM04,L08,PTO09,KHIS10}%
\ and random-coding based schemes for achieving the coherent information rate
\cite{PhysRevA.54.2629} of a quantum
channel~\cite{PhysRevA.55.1613,capacity2002shor,D03}. For some channels known
as degradable quantum channels \cite{cmp2005dev}, in which the channel to the
environment is noisier than the channel to the intended receiver, the
random-coding based schemes from
Refs.~\cite{PhysRevA.55.1613,capacity2002shor,D03} achieve their quantum
capacity, due to the particular structure of these channels.

In spite of the astounding progress in both quantum error
correction~\cite{DNM09}\ and quantum Shannon theory~\cite{W11}, none of the
high performance codes constructed to date are provably capacity achieving,
and none of the aforementioned schemes that achieve the capacity are explicit
(the proofs instead exploit randomness to establish the existence of a code).
Among the schemes that achieve the quantum capacity, perhaps
Devetak's~\cite{D03} provides the most clear recipe to a quantum code designer
interested in constructing a capacity-achieving quantum code. His proof takes
a cue from a certain security proof of quantum key
distribution~\cite{PhysRevLett.85.441} and the Schumacher-Westmoreland
correspondence between quantum privacy and quantum
coherence~\cite{PhysRevLett.80.5695},\ by first establishing the existence of
codes that achieve the private capacity of a quantum wiretap channel and then
demonstrating how to operate such a code in superposition so that it achieves
the quantum capacity. It is also clear that the structure of his codes bears
some similarities with Calderbank-Shor-Steane codes~\cite{CS96,S96}.

Along with the above developments, there have been impressive breakthroughs in
classical coding theory and information theory \cite{FC07}, one of which is
Arikan's recent work on polar codes \cite{A09}. Arikan's polar codes exploit a
phenomenon known as channel polarization, in which a particular recursive
encoding induces a set of synthesized channels from the original memoryless
noisy channels. The synthesized channels are such that a fraction of them are
perfect for data transmission, while the other fraction are completely
useless, and the fraction that are perfect is equal to the symmetric capacity
of the original channel. The codes are channel adapted, in the sense that
Arikan's \textquotedblleft polar coding rule\textquotedblright\ establishes
through which of the synthesized channels the sender should transmit data, and
this polar coding rule depends on the particular channel being used. The codes
are near explicit and have the desirable property that both the encoding and
decoding are efficient (the complexity of each is $O\left(  N\log N\right)  $
where $N$ is the blocklength of the code).

Arikan's work might make us wonder whether it would be possible to construct
polar codes for transmitting quantum data over general quantum channels, and
the development in the classical world most relevant for this task is due to
Madhavifar and Vardy \cite{MV10}. There, they established that a modification
of Arikan's original polar coding scheme can achieve the symmetric private
capacity of a degraded classical wiretap channel. (In order to make this
statement, the sender and receiver actually require access to a small amount
of secret key, but the rate of secret key needed vanishes when the code's
blocklength becomes large.) Thus, with the Madhavifar-Vardy scheme for polar
coding over classical degraded wiretap channels \cite{MV10}, the Devetak
scheme for operating a quantum wiretap code in superposition \cite{D03}, and our
recent work on polar codes for transmitting classical data over quantum
channels \cite{WG11}, it should be evident that one could put these pieces
together in order to construct polar codes for transmitting quantum data over
degradable quantum channels.

In this paper, we pursue this direction by constructing polar codes that
achieve a symmetric capacity for transmitting quantum data over particular
degradable quantum channels. These channels should satisfy the property that
encoding classical data in some orthonormal basis at their input leads to
commuting states for the environment (essentially, the environment becomes
classical), and we clarify later why this is important in our construction.
Many degradable channels fall into this class, including amplitude damping
channels \cite{GF05}, photon-detected jump channels \cite{ABCDGM01}, erasure
channels~\cite{GBP97}, dephasing channels~\cite{D03}, and cloning channels
\cite{B11}\ (channels induced by universal cloning machines~\cite{GM97,BH98}).
These noisy channels occur naturally in physical processes, with amplitude
damping modeling photon loss or spontaneous emission, the photon-detected jump
channel modeling the spontaneous decay of atoms with a detected photon
emission \cite{ABCDGM01}, the erasure channel being a different model for
photon loss \cite{GKLVD03}, the dephasing channel modeling random phase noise
in superconducting systems~\cite{BDKS08}, and the cloning channel modeling
stimulated emission from an atom~\cite{MH82,SWZ00,LSHB02}. Our codes are
symmetric capacity achieving for all of the above channels, and this follows
from analyzing a quantum polar coding rule for these channels.

We summarize briefly how the construction works. First, we consider a quantum
wiretap channel with one classical input and two quantum outputs, one for the
legitimate receiver (Bob)\ and the other for the wiretapper (Eve). We
demonstrate that these channels polarize in four different ways, based on
whether the channels are good or bad for the receiver or the wiretapper. In
order to have strong security, we must guarantee that the bad channels for the
wiretapper are in fact \textquotedblleft really bad\textquotedblright\ in a
precise sense, and it is for this reason that we consider channels with
classical environment. (Interestingly, we know of quantum channels where it is
not clear to us how to ensure that they become \textquotedblleft really
bad,\textquotedblright\ and we prove some results in this direction in
Appendix~\ref{app:pure-state-combining}.) The resulting coding scheme is to
send the information bits through the channels which are good for Bob and bad
for Eve, \textquotedblleft frozen\textquotedblright\ bits through the channels
that are bad for both, half of a secret key through channels which are bad for
Bob but good for Eve, and randomized bits through the channels good for both.
By an analysis similar to that of Madhavifar and Vardy \cite{MV10}, we can
demonstrate that this scheme achieves the symmetric private capacity of a
degraded quantum wiretap channel with classical environment, while the rate of
secret key required vanishes in the limit of large blocklength.

The main idea for constructing quantum polar codes for degradable quantum
channels with classical environment is just to operate the quantum wiretap
code in superposition and exploit Arikan's encoding with CNOT gates with
respect to some orthonormal basis. This amounts to sending information qubits
through the channels good for the receiver Bob and bad for the
\textit{environment} Eve, frozen ancilla qubits through the channels that are
bad for both, half of shared entanglement through the channels that are bad
for Bob but good for Eve, and superposed ancilla qubits in the state
$\left\vert +\right\rangle \equiv\left(  \left\vert 0\right\rangle +\left\vert
1\right\rangle \right)  /\sqrt{2}$ through the channels that are good for
both. The resulting quantum polar codes are entanglement-assisted
\cite{BDH06}, but we can prove that the entanglement consumption rate required
vanishes in the limit of large blocklength (in this regard, the codes here are
similar to Hsieh \textit{et al}.'s recent ones~\cite{HYH11}). Operating the
quantum successive cancellation decoder from Ref.~\cite{WG11}\ in a coherent
fashion, followed by controlled \textquotedblleft decoupling
unitaries\textquotedblright\ allows us to exploit the properties of the
quantum wiretap polar code in order to prove that the above quantum polar code
performs well (we note that this decoder is similar to Devetak's \cite{D03}).
The resulting quantum polar codes achieve the symmetric quantum capacity of
degradable channels with classical environment with an encoding circuit that
has complexity $O\left(  N\log N\right)  $. The decoding unfortunately remains
inefficient, but further efforts may lead to an efficient realization of a decoder.

Recently, Renes \textit{et al}.~have independently constructed quantum polar
codes that have both an efficient encoding and decoding, though they achieve
the coherent information rate only for Pauli channels \cite{RDR11}. We should
clarify the ways in which their scheme is different from ours. First, they
restrict their construction to Pauli channels because they are considering the
effective classical channels induced in the amplitude (Pauli-$Z$) and phase
(Pauli-$X$) bases (not all channels, including some of the ones mentioned
above, induce classical channels in complementary bases). As a result, they
can directly import Arikan's ideas because they are dealing with classical
channels in complementary bases. An additional bonus is that they obtain an
efficient decoder as well as an efficient encoder, essentially because their
decoder is Arikan's successive cancellation decoder implemented as an
efficient unitary operation. Their codes require the assistance of shared
entanglement, but there are some channels for which they can prove that it is
not required.

Our scheme is different from theirs in several ways. First, we have a quantum
polar coding rule that is adapted to a given quantum channel. In particular,
the rule used for determining the good or bad channels is based on a quantum
parameter (fidelity). Also, we demonstrate polarization in terms of two
quantum parameters, the fidelity and the Holevo information, by building on
our earlier results in Ref.~\cite{WG11}. It is for this reason that our scheme
is symmetric capacity-achieving for a wide variety of quantum channels. Also,
the first part of our decoder is a coherent version of the quantum successive
cancellation decoder from Ref.~\cite{WG11}, rather than one that is based
directly on the classical decoder. Since we have not proven that the quantum
successive cancellation decoder from Ref.~\cite{WG11} has an efficient
implementation, the coherent version of it in this work is certainly not
efficient. In spite of the inefficiency of the quantum successive cancellation
decoder, this decoder is needed in order to achieve the symmetric quantum
capacity of the channels considered here. Finally, all the channels we
consider here only require a vanishing rate of entanglement assistance, due to
an argument similar to Proposition~22 of Madhavifar and Vardy~\cite{MV10} and
the fact that they are degradable with a classical environment.

We structure this paper as follows. Section~\ref{sec:cq-review}\ reviews our
work from Ref.~\cite{WG11}\ on polar codes for classical-quantum channels. We
present in Section~\ref{sec:wiretap}\ our scheme for private communication
over quantum wiretap channels that are degraded with a classical environment.
Finally, we demonstrate how to construct quantum polar codes from quantum
wiretap polar codes in Section~\ref{sec:quantum-polar}. The last section
concludes with a summary and some open questions.

\section{Review of polar codes for classical-quantum channels}

\label{sec:cq-review}We begin by providing a brief review of polar codes
constructed for classical-quantum channels~\cite{WG11}. There, we considered
channels with binary classical inputs and quantum outputs of the form:%
\[
W:x\rightarrow\rho_{x},
\]
where $W$ denotes the channel, $x\in\left\{  0,1\right\}  $, and $\rho_{x}$ is
a density operator. The relevant parameters that determine channel performance
are the fidelity $F\left(  W\right)  \equiv\left\Vert \sqrt{\rho_{0}}%
\sqrt{\rho_{1}}\right\Vert _{1}^{2}$ and the symmetric Holevo information
$I\left(  W\right)  \equiv H\left(  \rho\right)  -1/2(H\left(  \rho
_{0}\right)  +H\left(  \rho_{1}\right)  )$ with $\rho=1/2\left(  \rho_{0}%
+\rho_{1}\right)  $ and the von Neumann entropy $H\left(  \sigma\right)
\equiv-$Tr$\left\{  \sigma\log_{2}\sigma\right\}  $. Channels with $F\left(
W\right)  \leq\epsilon$ are nearly noiseless and those with $F\left(
W\right)  \geq1-\epsilon$ are near to being completely useless. The fidelity
generalizes the Bhattacharya distance $Z$ \cite{A09}\ in the sense that
$\sqrt{F\left(  W\right)  }=Z\left(  W\right)  $ if the two states $\rho_{0}$
and $\rho_{1}$ commute (i.e., if the channel is classical).

In coding classical information for the above channel, we consider $N=2^{n}$
copies of $W$, such that the resulting channel is of the form%
\[
x^{N}\equiv x_{1}\cdots x_{N}\rightarrow\rho_{x^{N}}\equiv\rho_{x_{1}}%
\otimes\cdots\otimes\rho_{x_{N}},
\]
where $x^{N}$ is the length $N$ input and $\rho_{x^{N}}$ is the output state.
We can extend Arikan's idea of channel combining to this classical-quantum
channel, by considering the channels induced by a transformation on an input
bit (row) vector $u^{N}$:%
\[
u^{N}\rightarrow\rho_{u^{N}G_{N}},
\]
where $G_{N}=B_{N}F^{\otimes n}$, with $B_{N}$ being a permutation matrix that
reverses the order of the bits and%
\[
F=%
\begin{bmatrix}
1 & 0\\
1 & 1
\end{bmatrix}
.
\]
This classical encoding is equivalent to a network of classical CNOT\ gates
and permutation operations that can be implemented with complexity $O\left(
N\log N\right)  $ (see Figures~1 and 2 of Ref.~\cite{WG11} or Figures~1, 2,
and 3 of Ref.~\cite{A09}). We can also define the split channels from the
above combined channels as%
\begin{equation}
W_{N}^{\left(  i\right)  }:u_{i}\rightarrow\rho_{\left(  i\right)  ,u_{i}%
}^{U_{1}^{i-1}B^{N}}, \label{eq:split-channels}%
\end{equation}
where%
\begin{align}
\rho_{\left(  i\right)  ,u_{i}}^{U_{1}^{i-1}B^{N}}  &  \equiv\sum_{u_{1}%
^{i-1}}\frac{1}{2^{i-1}}\left\vert u_{1}^{i-1}\right\rangle \left\langle
u_{1}^{i-1}\right\vert ^{U_{1}^{i-1}}\otimes\overline{\rho}_{u_{1}^{i}}%
^{B^{N}},\\
\overline{\rho}_{u_{1}^{i}}^{B^{N}}  &  \equiv\sum_{u_{i+1}^{N}}\frac
{1}{2^{N-i}}\rho_{u^{N}G_{N}}^{B^{N}}. \label{eq:averaged-cond-states}%
\end{align}
The interpretation of this channel is that it is the one \textquotedblleft
seen\textquotedblright\ by the bit $u_{i}$ if all of the previous bits
$u_{1}^{i-1}$ are available and if we consider all the future bits
$u_{i+1}^{N}$ as randomized. This motivates the development of a quantum
successive cancellation decoder \cite{WG11}\ that attempts to distinguish
$u_{i}=0$ from $u_{i}=1$ by adaptively exploiting the results of previous
measurements and Helstrom-Holevo measurements \cite{H69,Hol72}\ for each bit decision.

Arikan's polar coding rule is to divide the channels into \textquotedblleft
good\textquotedblright\ ones and \textquotedblleft bad\textquotedblright%
\ ones. Let $\left[  N\right]  \equiv\left\{  1,\ldots,N\right\}  $ and
$\beta$ be a real such that $0<\beta<1/2$. The polar coding rule divides the
channels as follows:%
\begin{align}
\mathcal{G}_{N}\left(  W,\beta\right)   &  \equiv\left\{  i\in\left[
N\right]  :\sqrt{F(W_{N}^{\left(  i\right)  })}<2^{-N^{\beta}}\right\}
,\label{eq:good-channels}\\
\mathcal{B}_{N}\left(  W,\beta\right)   &  \equiv\left\{  i\in\left[
N\right]  :\sqrt{F(W_{N}^{\left(  i\right)  })}\geq2^{-N^{\beta}}\right\}  ,
\label{eq:bad-channels}%
\end{align}
so that the channels in $\mathcal{G}_{N}\left(  W,\beta\right)  $ are the good
ones and those in $\mathcal{B}_{N}\left(  W,\beta\right)  $ are the bad ones.
Observe that the quantum polar coding rule involves the quantum channel
parameter $F$, rather than a classical one such as the Bhattacharya distance.

The following theorem is helpful in determining what fraction of the channels
become good or bad \cite{AT09}:

\begin{theorem}
[Convergence Rate]\label{thm:conv-rate}Let $\left\{  X_{n}:n\geq0\right\}  $
be a random process with $0\leq X_{n}\leq1$ and satisfying%
\begin{align}
X_{n+1}  &  \leq qX_{n}\ \ \ \ \ \text{w.p.\ \ \ }%
1/2,\label{eq:general-process-relations-4-conv}\\
X_{n+1}  &  =X_{n}^{2}\ \ \ \ \ \text{w.p.\ \ \ }1/2,
\label{eq:general-process-relations-4-conv-2}%
\end{align}
where $q$ is some positive constant. Let $X_{\infty}=\lim_{n\rightarrow\infty
}X_{n}$ exist almost surely with $\Pr\left\{  X_{\infty}=0\right\}
=P_{\infty}$. Then for any $\beta<1/2$,%
\[
\lim_{n\rightarrow\infty}\Pr\left\{  X_{n}<2^{-2^{n\beta}}\right\}
=P_{\infty},
\]
and for any $\beta>1/2$,%
\[
\lim_{n\rightarrow\infty}\Pr\left\{  X_{n}<2^{-2^{n\beta}}\right\}  =0.
\]

\end{theorem}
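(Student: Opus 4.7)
The plan is to pass to a logarithmic scale on which the squaring step becomes a deterministic $+1$ increment. Define $Y_n \equiv -\log_2 X_n \in [0,\infty]$, and where $Y_n > 0$ set $V_n \equiv \log_2 Y_n$. Squaring gives $Y_{n+1} = 2 Y_n$, equivalently $V_{n+1} = V_n + 1$; the multiplicative bound gives $Y_{n+1} \geq Y_n - \log_2 q$, which for large $Y_n$ translates into $V_{n+1} \geq V_n - o(1)$. Thus once $Y_n$ is large, $V_n$ behaves essentially like a random walk that steps up by $1$ with probability $1/2$ and stays put with probability $1/2$, so its value after many steps is close to half the number of steps; the two halves of the theorem then correspond to concentrating $V_n$ around $n/2$ with fluctuations of order $o(n)$.

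For the direct part ($\beta < 1/2$), the easy inequality $\limsup \Pr\{X_n < 2^{-2^{n\beta}}\} \leq P_\infty$ follows because on the complementary event $\{X_\infty > 0\}$ the sequence $X_n$ is eventually bounded away from $0$. For the matching liminf, fix $\eta > 0$; by almost-sure convergence on $\{X_\infty = 0\}$ there exist $n_0$ and a threshold $M$ (large compared to $|\log_2 q|$) with $\Pr\{Y_{n_0} \geq M\} \geq P_\infty - \eta$. Conditioned on this event, let $S$ be the number of squaring steps in $(n_0, n]$. A step-by-step bookkeeping shows $Y_n \geq 2^S\bigl(M - (n - n_0)|\log_2 q|\bigr)$ for any ordering of the steps, since a squaring amplifies whatever positive residual $Y$ there is, while a multiplication decrements $Y$ by at most $|\log_2 q|$. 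Choosing $M$ large enough (or, more carefully, letting $M$ and $n_0$ depend on $n$ through a diagonal argument) makes the factor in parentheses at least $1$. A Hoeffding bound on $S \sim \mathrm{Bin}(n-n_0, 1/2)$ then forces $S \geq n\beta$ with probability tending to $1$ for any fixed $\beta < 1/2$, yielding $Y_n \geq 2^{n\beta}$ and hence $X_n < 2^{-2^{n\beta}}$ with probability at least $P_\infty - \eta - o(1)$.

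For the converse part ($\beta > 1/2$), the argument uses the complementary polar-coding inequality $X_{n+1} \geq X_n$ on the multiplicative branch (implicit in the intended applications, where $X$ tracks a fidelity or Bhattacharyya parameter that can only worsen on the bad branch). Under this lower bound $V_{n+1} \leq V_n$ on multiplication and $V_{n+1} = V_n + 1$ on squaring, so $V_n \leq V_0 + S_n$ with $S_n \sim \mathrm{Bin}(n,1/2)$. The inequality $X_n < 2^{-2^{n\beta}}$ forces $S_n \geq n\beta - V_0$, an event of probability vanishing exponentially in $n$ by the Chernoff bound whenever $\beta > 1/2$.

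The main obstacle is making the ``$V_n$ is a random walk of drift $1/2$ once $Y_n$ is large'' intuition quantitatively tight enough to capture the full range $\beta < 1/2$. The delicate coordination is between three limits: $n_0 \to \infty$ so that $Y_{n_0}$ exceeds the cutoff with probability close to $P_\infty$, the cutoff $M$ growing fast enough to absorb the accumulated $|\log_2 q|$ losses over the remaining $n - n_0$ multiplication steps, and the squaring-count $S$ concentrating near $(n-n_0)/2$ with a Chernoff-type deviation that closes the gap between $\beta$ and $1/2$. These are precisely the ingredients of the Arikan--Telatar argument in~\cite{AT09}, which I would follow essentially verbatim.
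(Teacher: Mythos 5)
The paper does not actually prove this theorem; it is quoted directly from Arikan and Telatar~\cite{AT09}, so there is no in-paper argument to measure you against. Your high-level plan is the right one: pass to $V_n = \log_2(-\log_2 X_n)$, observe that a squaring step is $V \mapsto V+1$ while a multiplicative step costs only $o(1)$ in $V$ once $Y_n = -\log_2 X_n$ is large, and conclude that $V_n$ drifts at rate roughly $1/2$. You are also right to flag that the $\beta>1/2$ converse needs a hypothesis absent from the statement as printed---some lower bound such as $X_{n+1}\geq X_n$ on the multiplicative branch---since otherwise that branch could send $X_{n+1}$ to $0$ and the $\beta>1/2$ conclusion would be false; AT09 state such a lower bound explicitly, and it holds in the polar-coding applications, but as a free-standing theorem the version here is too strong.

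There is, however, a real gap in your direct ($\beta<1/2$) argument. The inequality $Y_n \geq 2^S\bigl(M - (n-n_0)|\log_2 q|\bigr)$ is vacuous unless $M > (n-n_0)|\log_2 q|$, which forces $M$ to grow linearly in $n$; the ``diagonal'' you invoke would then need $\Pr\{Y_{n_0(n)} \geq M(n)\}\to P_\infty$ with $M(n)=\Omega(n)$ while keeping $n - n_0(n)$ a constant fraction of $n$ (so that Hoeffding on $S$ still yields $S\geq n\beta$). Martingale convergence gives only $Y_n\to\infty$ almost surely on $\{X_\infty=0\}$, with no quantitative rate, so this coordination is not available to you. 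The repair is to stay in the $V$-scale, exactly as your opening paragraph hints: fix $M$ \emph{independent of $n$} large enough that $\delta_M \equiv -\log_2(1-|\log_2 q|/M) < 1-2\beta$, set $A_{n_0}=\bigcap_{j\geq n_0}\{Y_j\geq M\}$, note that $A_{n_0}$ increases to an event of probability at least $P_\infty$ so some $n_0$ gives $\Pr[A_{n_0}]\geq P_\infty-\eta$, and on $A_{n_0}$ every multiplicative step costs at most $\delta_M$ in $V$, so $V_n \geq S_n - (n-n_0-S_n)\delta_M$. Intersecting (by a plain union bound, no independence of $A_{n_0}$ and $S_n$ required) with the Hoeffding event $\{S_n\geq (n-n_0)(1/2-\epsilon')\}$ then gives $V_n\geq n\beta$ with probability at least $P_\infty-\eta-o(1)$. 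Keeping $M$ bounded while the per-step loss $\delta_M$ shrinks is the essential point of the Arikan--Telatar bookkeeping; the $Y$-scale inequality you wrote cannot be patched without it.
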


One can then consider the channel combining and splitting mentioned above as a
random birth process in which a channel $W_{n+1}$ is constructed from two
copies of a previous one $W_{n}$ according to the rules in Section~4 of
Ref.~\cite{WG11}. One can then consider the process $\left\{  F_{n}%
:n\geq0\right\}  \equiv\left\{  \sqrt{F\left(  W_{n}\right)  }:n\geq0\right\}
$ and prove that it is a bounded super-martingale by exploiting the
relationships given in Proposition~10 of Ref.~\cite{WG11}. From the
convergence properties of martingales, one can then conclude that $F_{\infty}$
converges almost surely to a value in $\left\{  0,1\right\}  $, and the
probability that it equals zero is equal to the symmetric Holevo
information$~I\left(  W\right)  $. Furthermore, since the process $F_{n}$
satisfies the relations in (\ref{eq:general-process-relations-4-conv}%
-\ref{eq:general-process-relations-4-conv-2}), the following proposition on
the convergence rate of polarization holds:

\begin{theorem}
\label{thm:fraction-good}Given a binary input classical-quantum channel $W$
and any $\beta<1/2$,%
\[
\lim_{n\rightarrow\infty}\Pr\left\{  F_{n}<2^{-2^{n\beta}}\right\}  =I\left(
W\right)  .
\]

\end{theorem}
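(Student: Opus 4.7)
The strategy is to apply Theorem~\ref{thm:conv-rate} directly to the process $\{F_n\} \equiv \{\sqrt{F(W_n)}\}$. Three hypotheses must be checked: (i) boundedness $0 \leq F_n \leq 1$, (ii) the one-step recursion in (\ref{eq:general-process-relations-4-conv})--(\ref{eq:general-process-relations-4-conv-2}) for some positive constant $q$, and (iii) existence of $F_\infty$ almost surely with $\Pr\{F_\infty = 0\}$ identified explicitly. Boundedness is immediate from $F(W_n) \in [0,1]$. For the recursion, I would invoke the fidelity estimates from Proposition~10 of Ref.~\cite{WG11}, which control the fidelities of the two child channels $W_n^{-}$ and $W_n^{+}$ produced in the random birth process. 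One child obeys a bound of the form $F(W_n^{-}) \leq c\, F(W_n)$ for a fixed constant $c$, and taking square roots yields the first part of the recursion with $q = \sqrt{c}$; the other child satisfies $F(W_n^{+}) = F(W_n)^2$ exactly, which becomes $F_{n+1} = F_n^2$ after taking square roots. Since the birth process selects each child with probability $1/2$, this is exactly the form required by Theorem~\ref{thm:conv-rate}.

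For hypothesis (iii), I would invoke the super-martingale property of $\{F_n\}$ outlined in the preceding discussion: the martingale convergence theorem produces an almost-sure limit $F_\infty$, and the branch $F_{n+1} = F_n^2$, which occurs infinitely often almost surely, forces $F_\infty \in \{0,1\}$. With all three hypotheses in place, Theorem~\ref{thm:conv-rate} immediately yields
\[
\lim_{n \to \infty} \Pr\{F_n < 2^{-2^{n\beta}}\} = \Pr\{F_\infty = 0\}
\]
for any $\beta < 1/2$, and the theorem follows once we identify this probability as $I(W)$.

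\textbf{Main obstacle.} The genuinely nontrivial step is the identification $\Pr\{F_\infty = 0\} = I(W)$. The cleanest route is to run a parallel argument on the Holevo-information process $\{I_n\} \equiv \{I(W_n)\}$: using the preservation relation $I(W_n^{-}) + I(W_n^{+}) = 2\, I(W_n)$ from Ref.~\cite{WG11}, one shows $\{I_n\}$ is a bounded martingale and hence converges almost surely to some $I_\infty$, with $\mathbb{E}[I_\infty] = I(W)$ by bounded convergence. Then, using the quantitative comparison between $F(W)$ and $I(W)$ on binary-input classical-quantum channels (namely that $F(W) \to 0$ implies $I(W) \to 1$ and $F(W) \to 1$ implies $I(W) \to 0$), one concludes that $I_\infty = \mathbf{1}\{F_\infty = 0\}$ almost surely. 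Taking expectations gives $\Pr\{F_\infty = 0\} = I(W)$, completing the argument. The remaining work is bookkeeping: checking that the recursive bounds from Proposition~10 of Ref.~\cite{WG11} indeed fit the abstract form of Theorem~\ref{thm:conv-rate} after the square-root substitution, which is routine.
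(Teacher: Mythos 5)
Your proposal is correct and follows essentially the same route the paper takes: verify the hypotheses of Theorem~\ref{thm:conv-rate} for $F_n = \sqrt{F(W_n)}$ using the one-step fidelity bounds from Proposition~10 of Ref.~\cite{WG11}, then identify $\Pr\{F_\infty = 0\} = I(W)$ via the bounded martingale $I_n$ together with the quantitative $F$--$I$ comparison. The paper leaves most of this as a pointer to the preceding discussion and to Ref.~\cite{WG11}, so you have simply made the same argument explicit.
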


One of the important advances in Ref.~\cite{WG11} was to establish that a
quantum successive cancellation decoder performs well for polar coding over
classical-quantum channels. In this case, the decoder is some positive
operator-valued measure (POVM)\ $\left\{  \Lambda_{u_{\mathcal{A}}}\right\}  $
that attempts to decode the information bits $u_{\mathcal{A}}$ reliably. In
particular, we showed the following bound on the performance of such a decoder
(by exploiting Sen's \textquotedblleft non-commutative union
bound\textquotedblright~\cite{S11}):%
\[
\Pr\{\widehat{U}_{\mathcal{A}}\neq U_{\mathcal{A}}\}\leq2\sqrt{\sum
_{i\in\mathcal{A}}\frac{1}{2}\sqrt{F(W_{N}^{\left(  i\right)  })}},
\]
under the assumption that the sender chooses the information bits
$U_{\mathcal{A}}$ according to a uniform distribution. Thus, by choosing the
channels over which the sender transmits the information bits to be in
$\mathcal{G}_{N}\left(  W,\beta\right)  $ and those over which she transmits
agreed upon frozen bits to be in $\mathcal{B}_{N}\left(  W,\beta\right)  $, we
obtain the following bound on the probability of decoding error:%
\[
\Pr\{\widehat{U}_{\mathcal{A}}\neq U_{\mathcal{A}}\}=o(2^{-\frac{1}{2}%
N^{\beta}}).
\]
This completes the specification of a polar code for classical-quantum channels.

We end this section by stating a lemma that will prove useful for us:

\begin{lemma}
\label{lem:degrade}Let $W$ and $W^{\ast}$ both be binary-input
classical-quantum channels, such that $W^{\ast}$ is a degraded version of $W$,
in the sense that%
\[
W^{\ast}\left(  x\right)  =\mathcal{D}\left(  W\left(  x\right)  \right)  ,
\]
where $x$ is the classical input to the channels and $\mathcal{D}$ is some
degrading quantum channel from $W$ to $W^{\ast}$. Let $W_{N}^{\left(
1\right)  }$, \ldots, $W_{N}^{\left(  N\right)  }$ and $W_{N}^{\ast\left(
1\right)  }$, \ldots, $W_{N}^{\ast\left(  N\right)  }$ denote the
corresponding synthesized channels from channel combining and splitting. Then
$W_{N}^{\ast\left(  i\right)  }$ is degraded with respect to $W_{N}^{\left(
i\right)  }$ for all $i\in\left[  N\right]  $ and furthermore, we have that
$I(W_{N}^{\left(  i\right)  })\geq I(W_{N}^{\ast\left(  i\right)  })$ and
$F(W_{N}^{\left(  i\right)  })\leq F(W_{N}^{\ast\left(  i\right)  })$.
\end{lemma}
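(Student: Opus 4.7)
The strategy is to show that quantum degradation is preserved by Arikan's channel combining-and-splitting, because that construction acts only on classical inputs and forms tensor products and (classical) averages on the quantum outputs. The two parameter inequalities then fall out of the standard data-processing properties of the Holevo quantity and the Uhlmann fidelity, so no induction on $n$ is really necessary.

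Concretely, I would first write $\rho_{u^{N}G_{N}}^{B^{N}}=\rho_{x_{1}}^{B_{1}}\otimes\cdots\otimes\rho_{x_{N}}^{B_{N}}$ with $x^{N}=u^{N}G_{N}$, and note that applying $\mathcal{D}^{\otimes N}$ to the quantum register $B^{N}$ replaces each $\rho_{x_{j}}$ by $\mathcal{D}(\rho_{x_{j}})$, which by hypothesis is the corresponding $W^{\ast}$ output state. Since the averages over $u_{1}^{i-1}$ and $u_{i+1}^{N}$ appearing in (\ref{eq:averaged-cond-states}) and in the definition of $\rho_{(i),u_{i}}^{U_{1}^{i-1}B^{N}}$ live on the classical input side, they commute with the quantum map $\mathcal{D}^{\otimes N}$, and the classical register $U_{1}^{i-1}$ is left untouched. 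Consequently,
\[
\bigl(\mathrm{id}^{U_{1}^{i-1}}\otimes\mathcal{D}^{\otimes N}\bigr)\bigl(\rho_{(i),u_{i}}^{U_{1}^{i-1}B^{N}}\bigr)=\rho_{(i),u_{i}}^{\ast\,U_{1}^{i-1}B^{N}},
\]
which is precisely the statement that $W_{N}^{\ast(i)}$ is a quantum degradation of $W_{N}^{(i)}$, with degrading channel $\mathrm{id}^{U_{1}^{i-1}}\otimes\mathcal{D}^{\otimes N}$.

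The two parameter inequalities then follow from data processing. The Holevo quantity of an ensemble cannot increase when a quantum channel is applied to its states, so $I(W_{N}^{(i)})\geq I(W_{N}^{\ast(i)})$. The Uhlmann fidelity is monotone non-decreasing under quantum operations, and hence applying $\mathrm{id}^{U_{1}^{i-1}}\otimes\mathcal{D}^{\otimes N}$ can only increase the fidelity between the two conditional output states $\rho_{(i),0}^{U_{1}^{i-1}B^{N}}$ and $\rho_{(i),1}^{U_{1}^{i-1}B^{N}}$; squaring preserves this inequality and gives $F(W_{N}^{(i)})\leq F(W_{N}^{\ast(i)})$. The only potential obstacle is bookkeeping, namely verifying that the conditional register $U_{1}^{i-1}$ is genuinely classical (so that it is unaffected by the quantum degradation) and that the marginalization over $u_{i+1}^{N}$ commutes with $\mathcal{D}^{\otimes N}$; both are immediate from the explicit formulas above, so the lemma really reduces to invoking data processing twice.
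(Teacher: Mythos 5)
Your proposal is correct and takes essentially the same route as the paper: the paper's proof simply states that the lemma follows from the definition of the split channels and the monotonicity of quantum mutual information and fidelity under the degrading map $\mathcal{D}$, which is exactly the argument you flesh out (identifying $\mathrm{id}^{U_1^{i-1}}\otimes\mathcal{D}^{\otimes N}$ as the degrading channel between synthesized channels, noting that it commutes with the classical averages over $u_1^{i-1}$ and $u_{i+1}^{N}$, and then invoking data processing twice).
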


\begin{proof}
This lemma follows straightforwardly from the definition in
(\ref{eq:split-channels}), and the fact that quantum mutual information and
fidelity are monotone under quantum processing with the degrading map
$\mathcal{D}$~\cite{W11}.
\end{proof}

We can then observe from the above lemma and (\ref{eq:good-channels}) that if
$W^{\ast}$ is degraded with respect to $W$, the good channels for $W^{\ast}$
are a subset of those that are good for $W$: $\mathcal{G}_{N}\left(  W^{\ast
},\beta\right)  \subseteq\mathcal{G}_{N}\left(  W,\beta\right)  $. Similarly,
the following relationship holds as well:\ $\mathcal{B}_{N}\left(
W,\beta\right)  \subseteq\mathcal{B}_{N}\left(  W^{\ast},\beta\right)  $.

\section{Quantum wiretap polar codes}

\label{sec:wiretap}We now discuss how to construct polar codes that achieve
the symmetric private information rate for a quantum wiretap channel. The
results in this section build upon those of Mahdavifar and Vardy in
Ref.~\cite{MV10}.

The model for a binary-input quantum wiretap channel is as follows:%
\[
x\rightarrow\rho_{x}^{BE},
\]
where $x\in\left\{  0,1\right\}  $ and $\rho_{x}^{BE}$ is a density operator
on a tensor product Hilbert space $BE$. The legitimate receiver Bob has access
to the system $B$ and the eavesdropper Eve has access to the system $E$. Thus,
Bob's density operator is%
\[
\rho_{x}^{B}=\text{Tr}_{E}\left\{  \rho_{x}^{BE}\right\}  ,
\]
and Eve's density operator is%
\[
\rho_{x}^{E}=\text{Tr}_{B}\left\{  \rho_{x}^{BE}\right\}  .
\]
The quantum wiretap channel is degraded if there exists some quantum channel
$\mathcal{D}$ such that the following condition holds for all $x$:%
\[
\rho_{x}^{E}=\mathcal{D}\left(  \rho_{x}^{B}\right)  .
\]
Let $W$ denote the channel to Bob:%
\begin{equation}
W:x\rightarrow\rho_{x}^{B}, \label{eq:channel-to-Bob}%
\end{equation}
and let $W^{\ast}$ denote the channel to Eve:%
\begin{equation}
W^{\ast}:x\rightarrow\rho_{x}^{E}. \label{eq:channel-to-Eve}%
\end{equation}

In order to make a statement about the strong security of a quantum wiretap
polar code, we need to ensure that the channels over which the sender is
transmitting information bits to Bob should be \textquotedblleft really
bad\textquotedblright\ for Eve. That is, it is not sufficient for the channels
to satisfy (\ref{eq:bad-channels}), but they should be divided as to whether
they are poor for Eve according to the following stronger criterion:%
\[
\mathcal{P}\left(  W^{\ast},\beta\right)  \equiv\left\{  i\in\left[  N\right]
:\sqrt{F(W_{N}^{\ast\left(  i\right)  })}>1-2^{-N^{\beta}}\right\}  .
\]
Dividing the channels for Eve in this way makes it nearly impossible for her
to determine whether the sender transmits a zero or one through these channels
in the limit where $N$ becomes large. In what follows, we say that the
channels in $\mathcal{P}\left(  W^{\ast},\beta\right)  $ are \textquotedblleft
bad\textquotedblright\ for Eve while those in $\mathcal{P}^{c}\left(  W^{\ast
},\beta\right)  \equiv\left[  N\right]  \setminus\mathcal{P}\left(  W^{\ast
},\beta\right)  $ are \textquotedblleft good\textquotedblright\ for Eve.

It is again important for us to know what fraction of the channels
$W_{N}^{\ast\left(  i\right)  }$ become bad for Eve in order to establish that
the quantum wiretap polar codes are symmetric capacity-achieving---i.e., it
would be good to have another theorem similar to
Theorem~\ref{thm:fraction-good} for this case. In order to have such a
theorem, we would require a birth process that obeys the properties in
(\ref{eq:general-process-relations-4-conv}%
-\ref{eq:general-process-relations-4-conv-2}). Fortunately, in the case that
$\rho_{0}^{E}$ and $\rho_{1}^{E}$ commute, we have the following proposition:

\begin{proposition}
\label{prop:poor-channel-fraction}Suppose that the states $\rho_{0}^{E}$ and
$\rho_{1}^{E}$ for the binary-input classical-quantum channel $W^{\ast}$
commute. Then for any $\beta<1/2$,%
\[
\lim_{n\rightarrow\infty}\Pr\left\{  F_{n}^{\ast}>1-2^{-2^{n\beta}}\right\}
=1-I\left(  W^{\ast}\right)  ,
\]
where $F_{n}^{\ast}$ is the process $\left\{  F_{n}^{\ast}:n\geq0\right\}
\equiv\{\sqrt{F(W_{n}^{\ast})}:n\geq0\}$.
\end{proposition}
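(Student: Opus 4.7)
The plan is to reduce to a purely classical polarization question via the commuting hypothesis and then apply Theorem~\ref{thm:conv-rate} to a carefully chosen transformation of the process. Because $\rho_{0}^{E}$ and $\rho_{1}^{E}$ commute, they share an eigenbasis, so $W^{\ast}$ is unitarily equivalent to a classical binary-input symmetric channel, and the fidelity collapses to $F(W^{\ast}) = Z(W^{\ast})^{2}$ where $Z(W^{\ast})$ is the classical Bhattacharya parameter. Consequently $F_{n}^{\ast} = \sqrt{F(W_{n}^{\ast})}$ coincides with the classical Bhattacharya parameter of the $n$-th synthesized channel, and satisfies Arikan's standard recursions $Z(W^{+}) = Z(W)^{2}$ and $Z(W^{-}) \leq 2 Z(W) - Z(W)^{2}$.

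The naive candidate $1 - F_{n}^{\ast}$ does not admit a clean recursion in the minus branch, so instead I would apply Theorem~\ref{thm:conv-rate} to the process $U_{n} \equiv 1 - F(W_{n}^{\ast}) = 1 - (F_{n}^{\ast})^{2}$. The plus branch gives
\[
U^{+} = 1 - (F_{n}^{\ast})^{4} = U_{n}(2 - U_{n}) \leq 2 U_{n},
\]
matching the condition $X_{n+1} \leq q X_{n}$ with $q = 2$. The required minus-branch bound $U^{-} \leq U_{n}^{2}$ is equivalent to the key inequality
\[
Z(W^{-}) \geq Z(W) \sqrt{2 - Z(W)^{2}}
\]
for every binary-input symmetric classical channel, which is the main technical obstacle. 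Direct computation confirms equality for the binary symmetric channel and strict inequality for the binary erasure channel; in general the inequality can be established by a Cauchy--Schwarz manipulation of the explicit sum defining $Z(W^{-})$, or imported from the corresponding lemma in Mahdavifar and Vardy~\cite{MV10}.

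With the key inequality in hand, the remainder is routine. The process $U_{n}$ is a bounded supermartingale whose increments vanish almost surely; combined with the plus-branch identity $U^{+} = U(2 - U)$ this forces $U_{\infty}(1 - U_{\infty}) = 0$ almost surely, so $U_{\infty} \in \{0,1\}$ with probability one. Since $U_{\infty} = 0$ if and only if $F_{\infty}^{\ast} = 1$, Theorem~\ref{thm:fraction-good} identifies $\Pr\{U_{\infty} = 0\} = 1 - I(W^{\ast})$. Applying Theorem~\ref{thm:conv-rate} to $U_{n}$ (where the minus-branch hypothesis holds as an upper bound rather than an equality, which can only accelerate the tail decay) then yields $\Pr\{U_{n} < 2^{-2^{n\beta}}\} \to 1 - I(W^{\ast})$. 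Because $U_{n} < 2^{-2^{n\beta}}$ implies $F_{n}^{\ast} \geq \sqrt{1 - 2^{-2^{n\beta}}} > 1 - 2^{-2^{n\beta}}$, the $\liminf$ of $\Pr\{F_{n}^{\ast} > 1 - 2^{-2^{n\beta}}\}$ is at least $1 - I(W^{\ast})$, while the matching $\limsup$ is immediate from the almost-sure convergence of $F_{n}^{\ast}$ to $\{0,1\}$.
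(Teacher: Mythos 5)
Your proof is correct and follows essentially the same route as the paper's: you reduce to the classical setting via the commuting hypothesis, pass to the transformed process $U_{n}=1-(F_{n}^{\ast})^{2}$ (which the paper calls $X_{n}$), verify the two branch recursions using the same key inequality $Z(W^{-})\geq Z(W)\sqrt{2-Z(W)^{2}}$ (Lemma~3.16 of Korada's thesis, which the paper cites), and then apply Theorem~\ref{thm:conv-rate} together with $U_{n}\geq 1-F_{n}^{\ast}$. The only points of (minor, inessential) divergence are expository: you derive the almost-sure convergence of $U_{\infty}$ from the supermartingale property rather than importing it from the earlier fidelity-martingale analysis, and you explicitly flag that the squaring branch holds with $\leq$ rather than $=$, a technical looseness that the paper glosses over but that, as you observe, only helps.
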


\begin{proof}
The proof proceeds along similar lines as Theorem~3.15 in Ref.~\cite{K09}.
Since the states $\rho_{0}^{E}$ and $\rho_{1}^{E}$ commute, they are
effectively classical, and the fidelities in $F_{n}^{\ast}$ reduce to the
classical Bhattacharya parameters. It is then possible to show that this
process satisfies%
\begin{align}
F_{n+1}^{\ast} &  \geq F_{n}^{\ast}\sqrt{2-\left(  F_{n}^{\ast}\right)  ^{2}%
}\ \ \ \text{w.p.\ \ \ }1/2,\label{eq:critical-inequality}\\
F_{n+1}^{\ast} &  =\left(  F_{n}^{\ast}\right)  ^{2}%
\ \ \ \ \ \ \ \ \ \ \ \ \ \ \ \text{w.p.\ \ \ }1/2.
\end{align}
The first relation follows from Lemma~3.16 in Ref.~\cite{K09}, and the second
follows from Lemma~2.16 in Ref.~\cite{K09}. We can then rewrite the above
conditions as follows:%
\begin{align}
1-\left(  F_{n+1}^{\ast}\right)  ^{2} &  \leq\left(  1-\left(  F_{n}^{\ast
}\right)  ^{2}\right)  ^{2}\ \ \ \ \ \ \ \ \ \ \text{w.p.\ \ \ }1/2,\\
1-\left(  F_{n+1}^{\ast}\right)  ^{2} &  =1-\left(  F_{n}^{\ast}\right)
^{4}\leq2\left(  1-\left(  F_{n}^{\ast}\right)  ^{2}\right)
\ \text{w.p.\ \ \ }1/2.
\end{align}
Defining $X_{n}$ by $X_{n}\equiv1-\left(  F_{n}^{\ast}\right)  ^{2}$, it is
now clear the process $X_{n}$ satisfies the conditions in
(\ref{eq:general-process-relations-4-conv}%
-\ref{eq:general-process-relations-4-conv-2}). Since we know that $F_{n}%
^{\ast}$ converges almost surely to a random variable $F_{\infty}^{\ast}$
taking values in $\left\{  0,1\right\}  $ with $\Pr\left\{  F_{\infty}^{\ast
}=1\right\}  =1-I\left(  W^{\ast}\right)  $, it follows that $X_{n}$ converges
almost surely to $X_{\infty}$ with $\Pr\left\{  X_{\infty}=0\right\}
=1-I\left(  W^{\ast}\right)  $. The process $X_{n}$ then satisfies all the
requirements needed to apply Theorem~\ref{thm:conv-rate}, so that%
\[
\lim_{n\rightarrow\infty}\Pr\left\{  X_{n}<2^{-2^{n\beta}}\right\}
=1-I\left(  W^{\ast}\right)  ,
\]
which in turn, from the relation $X_{n}=1-\left(  F_{n}^{\ast}\right)
^{2}\geq1-F_{n}^{\ast}$, implies that%
\[
\lim_{n\rightarrow\infty}\Pr\left\{  1-F_{n}^{\ast}<2^{-2^{n\beta}}\right\}
=1-I\left(  W^{\ast}\right)  ,
\]
giving the statement of the proposition.
\end{proof}

It is worthwhile to discuss why we specialized the above proposition to the
case where the states $\rho_{0}^{E}$ and $\rho_{1}^{E}$ are commuting. First,
as we demonstrate in Appendix~\ref{app:channel-examples}, there are many
examples of natural quantum channels for which this condition holds, including
amplitude damping channels, photon-detected jump channels, dephasing channels,
erasure channels, and cloning channels. Thus, the quantum wiretap polar coding
scheme in this section and the quantum polar coding scheme in the next section
works well for these channels. On the other hand, there exist quantum wiretap
channels for which the critical inequality in (\ref{eq:critical-inequality})
does not hold. For example, Appendix~\ref{app:pure-state-combining}%
\ demonstrates a violation of the inequality whenever the states $\rho_{0}%
^{E}$ and $\rho_{1}^{E}$ are pure and such that Tr$\left\{  \rho_{0}^{E}%
\rho_{1}^{E}\right\}  \notin\left\{  0,1\right\}  $. So the scheme given in
this section does not necessarily achieve the symmetric private capacity for
such channels because it is not clear how to guarantee that the fraction of
bad channels for Eve is equal to $1-I\left(  W^{\ast}\right)  $.

We can now establish our scheme for a quantum wiretap polar code. We divide
the set $\left[  N\right]  $ into four different subsets:%
\begin{align*}
\mathcal{A}  &  \equiv\mathcal{P}\left(  W^{\ast},\beta\right)  \cap
\mathcal{G}_{N}\left(  W,\beta\right)  ,\\
\mathcal{B}  &  \equiv\mathcal{P}\left(  W^{\ast},\beta\right)  \cap
\mathcal{B}_{N}\left(  W,\beta\right)  ,\\
\mathcal{X}  &  \equiv\mathcal{P}^{c}\left(  W^{\ast},\beta\right)
\cap\mathcal{B}_{N}\left(  W,\beta\right)  ,\\
\mathcal{Y}  &  \equiv\mathcal{P}^{c}\left(  W^{\ast},\beta\right)
\cap\mathcal{G}_{N}\left(  W,\beta\right)  .
\end{align*}
Observe that $\mathcal{A}$, $\mathcal{B}$, $\mathcal{X}$, and $\mathcal{Y}$
form a partition of $\left[  N\right]  $ because they are all pairwise
disjoint and $\mathcal{A}\cup\mathcal{B}\cup\mathcal{X}\cup\mathcal{Y}=\left[
N\right]  $. Thus, the set $\mathcal{A}$ consists of channels that are good
for Bob and bad for Eve, $\mathcal{B}$ has the channels that are bad for both,
$\mathcal{X}$ has the channels that are good for Eve and bad for Bob, and
$\mathcal{Y}$ has the channels that are good for Eve and good for Bob. The
Mahdavifar-Vardy coding scheme is then straightforward:

\begin{enumerate}
\item Send the information bits through the channels in $\mathcal{A}$.

\item Send the frozen bit vector $u_{\mathcal{B}}$ through the channels in
$\mathcal{B}$.

\item Send randomized bits through the channels in $\mathcal{Y}$.

\item We suppose that Alice and Bob have access to a secret key before
communication begins. Alice inputs her half of the secret key into the
channels in $\mathcal{X}$.\footnote{This is a slight variation of the
Mahdavifar-Vardy coding scheme that ensures both reliability and strong
security. Mahdavifar and Vardy were inconclusive about the reliability of
their coding scheme because they were unable to make any statements about the
reliability of the channels in $\mathcal{X}$. This minor variation with the
addition of a secret key ensures security and reliability because a secret key
is a hybrid of a frozen bit and a randomized bit. It is similar to a frozen
bit in that its value is available to Bob and thus he does not need to decode
the bit channels with secret key input. It is similar to a randomized bit from
the assumption that its value is uniform and unknown to Eve.} Mahdavifar and
Vardy demonstrated that the fraction $\left\vert \mathcal{X}\right\vert /N$
tends to zero in the limit $N\rightarrow\infty$ \cite{MV10}, and a slight
modification of their argument demonstrates that the codes constructed here
have the same property. This implies that the rate of secret key needed to
ensure reliability and strong security for this coding scheme vanishes and is
thus negligible in the asymptotic limit (we require the strong security
criterion for when we produce quantum polar codes from quantum wiretap polar codes).
\end{enumerate}

The following theorem guarantees that the rate of the quantum wiretap polar
code is equal to the symmetric private information:

\begin{theorem}
For the quantum wiretap polar coding scheme given above, a degraded quantum
wiretap channel with $W$ and $W^{\ast}$ as defined in (\ref{eq:channel-to-Bob}%
-\ref{eq:channel-to-Eve}), with $W^{\ast}$ having a classical output, and for
sufficiently large $N$, its rate $R_{N}=\left\vert \mathcal{A}\right\vert /N$
converges to the symmetric private information:%
\[
\lim_{N\rightarrow\infty}R_{N}=I\left(  W\right)  -I\left(  W^{\ast}\right)
.
\]

\end{theorem}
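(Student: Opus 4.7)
The plan is to compute $|\mathcal{A}|/N$ by combining the two polarization-rate results already established---Theorem~\ref{thm:fraction-good} and Proposition~\ref{prop:poor-channel-fraction}---with the nesting consequence of Lemma~\ref{lem:degrade}. The argument is essentially a counting argument on the four-way partition $\mathcal{A},\mathcal{B},\mathcal{X},\mathcal{Y}$.

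First I would record the three asymptotic fractions we have at our disposal. By Theorem~\ref{thm:fraction-good} applied to $W$ and to $W^{\ast}$ (the latter being a legitimate binary-input classical-quantum channel, with commuting outputs by hypothesis), we have $|\mathcal{G}_{N}(W,\beta)|/N \to I(W)$ and $|\mathcal{G}_{N}(W^{\ast},\beta)|/N \to I(W^{\ast})$. By Proposition~\ref{prop:poor-channel-fraction}, whose applicability is precisely what requires the classical-environment assumption on $W^{\ast}$, we have $|\mathcal{P}(W^{\ast},\beta)|/N \to 1 - I(W^{\ast})$, i.e.\ $|\mathcal{P}^{c}(W^{\ast},\beta)|/N \to I(W^{\ast})$.

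Second I would extract two nesting facts. (i) For all $N$ large enough that $2^{-N^{\beta}} < 1/2$, comparing the defining thresholds gives $\mathcal{G}_{N}(W^{\ast},\beta) \subseteq \mathcal{P}^{c}(W^{\ast},\beta)$; since these two sets have the \emph{same} asymptotic fractional size $I(W^{\ast})$, the ``middle'' set $\mathcal{M}_{N} \equiv \mathcal{P}^{c}(W^{\ast},\beta) \setminus \mathcal{G}_{N}(W^{\ast},\beta)$ of Eve's channels must satisfy $|\mathcal{M}_{N}|/N \to 0$. (ii) Because $W^{\ast}$ is a degraded version of $W$, the consequence of Lemma~\ref{lem:degrade} noted in the text gives $\mathcal{G}_{N}(W^{\ast},\beta) \subseteq \mathcal{G}_{N}(W,\beta)$.

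Third, I would express $\mathcal{A}$ as a set-theoretic difference, $\mathcal{A} = \mathcal{G}_{N}(W,\beta) \setminus \bigl[\mathcal{G}_{N}(W,\beta) \cap \mathcal{P}^{c}(W^{\ast},\beta)\bigr]$, and split the subtracted intersection using (i)--(ii):
\[
\mathcal{G}_{N}(W,\beta) \cap \mathcal{P}^{c}(W^{\ast},\beta) = \mathcal{G}_{N}(W^{\ast},\beta) \,\cup\, \bigl[\mathcal{G}_{N}(W,\beta) \cap \mathcal{M}_{N}\bigr],
\]
where the first piece has asymptotic fraction $I(W^{\ast})$ and the second piece has vanishing fraction by (i). Subtracting from $|\mathcal{G}_{N}(W,\beta)|/N \to I(W)$ yields $\lim_{N\to\infty} R_{N} = I(W) - I(W^{\ast})$.

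The only real obstacle is controlling the ``middle'' set $\mathcal{M}_{N}$ of Eve's channels---those whose fidelity has polarized neither all the way towards $0$ (useful to Eve) nor all the way above $1 - 2^{-N^{\beta}}$ (truly useless to Eve). Showing $|\mathcal{M}_{N}|/N \to 0$ is exactly where the classical-environment hypothesis is used, since without commuting $\rho_{0}^{E},\rho_{1}^{E}$ the critical inequality (\ref{eq:critical-inequality}) underlying Proposition~\ref{prop:poor-channel-fraction} may fail and the fraction of channels that are really bad for Eve could fall short of $1 - I(W^{\ast})$. Once the middle set is known to vanish, the rest is bookkeeping on the four-way partition.
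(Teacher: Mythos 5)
Your proposal is correct and rests on exactly the same three ingredients as the paper's proof---Theorem~\ref{thm:fraction-good}, Proposition~\ref{prop:poor-channel-fraction}, and the nesting consequence of Lemma~\ref{lem:degrade}---but the bookkeeping is organized differently. The paper applies inclusion--exclusion to $\mathcal{A}=\mathcal{P}(W^{*},\beta)\cap\mathcal{G}_{N}(W,\beta)$, observes that $\mathcal{P}(W^{*},\beta)\cup\mathcal{G}_{N}(W,\beta)=[N]\setminus\mathcal{X}$, and then appeals to the separate proposition that $\lvert\mathcal{X}\rvert/N\to 0$ (which is itself proved via a pairwise-disjointness argument invoking the degradation nesting). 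You instead write $\mathcal{A}=\mathcal{G}_{N}(W,\beta)\setminus\mathcal{Y}$ and control $\mathcal{Y}=\mathcal{G}_{N}(W,\beta)\cap\mathcal{P}^{c}(W^{*},\beta)$ directly by splitting it through your middle set $\mathcal{M}_{N}=\mathcal{P}^{c}(W^{*},\beta)\setminus\mathcal{G}_{N}(W^{*},\beta)$, whose fraction vanishes because the two polarization rates for $W^{*}$ sum to one. These are mathematically equivalent (indeed $\mathcal{X}\subseteq\mathcal{M}_{N}$, since $\mathcal{B}_{N}(W,\beta)\subseteq\mathcal{B}_{N}(W^{*},\beta)$), and both are valid. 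What your repackaging buys is a self-contained argument that does not defer the vanishing-fraction step to a later proposition, and it cleanly isolates where each nesting fact ($\mathcal{G}_{N}(W^{*},\beta)\subseteq\mathcal{P}^{c}(W^{*},\beta)$ versus $\mathcal{G}_{N}(W^{*},\beta)\subseteq\mathcal{G}_{N}(W,\beta)$) is actually used.
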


\begin{proof}
We just need to determine the size of the set $\mathcal{A}$. From basic set
theory, we know that%
\begin{align*}
\frac{\left\vert \mathcal{A}\right\vert }{N} &  =\frac{1}{N}\left\vert
\mathcal{P}\left(  W^{\ast},\beta\right)  \cap\mathcal{G}_{N}\left(
W,\beta\right)  \right\vert \\
&  =\frac{\left\vert \mathcal{P}\left(  W^{\ast},\beta\right)  \right\vert
}{N}+\frac{\left\vert \mathcal{G}_{N}\left(  W,\beta\right)  \right\vert }%
{N}\\
&  \ \ \ \ \ \ \ -\frac{1}{N}\left\vert \mathcal{P}\left(  W^{\ast}%
,\beta\right)  \cup\mathcal{G}_{N}\left(  W,\beta\right)  \right\vert .
\end{align*}
Consider that%
\begin{align*}
\mathcal{P}\left(  W^{\ast},\beta\right)  \cup\mathcal{G}_{N}\left(
W,\beta\right)   &  =\left[  N\right]  \setminus\left(  \mathcal{P}\left(
W^{\ast},\beta\right)  \cup\mathcal{G}_{N}\left(  W,\beta\right)  \right)
^{c}\\
&  =\left[  N\right]  \setminus\left(  \mathcal{P}^{c}\left(  W^{\ast}%
,\beta\right)  \cap\mathcal{B}_{N}\left(  W,\beta\right)  \right)  \\
&  =\left[  N\right]  \setminus\mathcal{X}%
\end{align*}
So it follows that%
\begin{align*}
\frac{\left\vert \mathcal{A}\right\vert }{N} &  =\frac{\left\vert
\mathcal{P}\left(  W^{\ast},\beta\right)  \right\vert }{N}+\frac{\left\vert
\mathcal{G}_{N}\left(  W,\beta\right)  \right\vert }{N}-\frac{1}{N}\left\vert
\left[  N\right]  \setminus\mathcal{X}\right\vert \\
&  =\frac{\left\vert \mathcal{P}\left(  W^{\ast},\beta\right)  \right\vert
}{N}+\frac{\left\vert \mathcal{G}_{N}\left(  W,\beta\right)  \right\vert }%
{N}-1+\frac{\left\vert \mathcal{X}\right\vert }{N}.
\end{align*}
In the limit as $N$ becomes large, we know from
Proposition~\ref{prop:poor-channel-fraction} that%
\[
\lim_{N\rightarrow\infty}\frac{\left\vert \mathcal{P}\left(  W^{\ast}%
,\beta\right)  \right\vert }{N}=1-I\left(  W^{\ast}\right)  ,
\]
and from Theorem~\ref{thm:fraction-good} that%
\[
\lim_{N\rightarrow\infty}\frac{\left\vert \mathcal{G}_{N}\left(
W,\beta\right)  \right\vert }{N}=I\left(  W\right)  .
\]
Finally, we later show that $\lim_{N\rightarrow\infty}\left\vert
\mathcal{X}\right\vert /N=0$. The statement of the theorem then follows.
\end{proof}

The following theorem demonstrates that the quantum wiretap polar coding
scheme has strong security:

\begin{theorem}
For the quantum wiretap polar coding scheme given above, a degraded quantum
wiretap channel with $W$ and $W^{\ast}$ as defined in (\ref{eq:channel-to-Bob}%
-\ref{eq:channel-to-Eve}), with $W^{\ast}$ having a classical output, and for
sufficiently large $N$, it satisfies the following strong security criterion:%
\[
I\left(  U_{\mathcal{A}};E^{n}\right)  =o\left(  2^{-\frac{1}{2}N^{\beta}%
}\right)  .
\]

\end{theorem}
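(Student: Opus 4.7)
The plan is to adapt the Mahdavifar--Vardy strategy to the quantum setting: reduce $I(U_{\mathcal{A}};E^{N})$ to a sum of Eve's split-channel Holevo informations, and then use the classical-environment hypothesis to bound each such Holevo information in terms of the split-channel fidelity, which is forced to be close to $1$ for every index in $\mathcal{A}\subseteq\mathcal{P}(W^{\ast},\beta)$ by the definition of $\mathcal{P}$.

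First, I would expand $I(U_{\mathcal{A}};E^{N})$ by the chain rule along the elements of $\mathcal{A}$ in increasing order:
\[
I(U_{\mathcal{A}};E^{N})=\sum_{i\in\mathcal{A}}I(U_{i};E^{N}\,|\,U_{\mathcal{A},1:i-1}).
\]
Because the non-message inputs $U_{\mathcal{A}^{c}}$ (frozen bits, half of the shared secret key, and uniformly random ``garbage'' bits) are chosen independently of $U_{\mathcal{A}}$, one has $H(U_{i}\,|\,U_{\mathcal{A},1:i-1})=H(U_{i})$. Since $U_{1}^{i-1}\supseteq U_{\mathcal{A},1:i-1}$, the conditioning-reduces-entropy inequality yields $H(U_{i}\,|\,E^{N},U_{\mathcal{A},1:i-1})\geq H(U_{i}\,|\,E^{N},U_{1}^{i-1})$, so each summand is at most $I(U_{i};E^{N}U_{1}^{i-1})$, which is precisely the symmetric Holevo information $I(W_{N}^{\ast(i)})$ of the $i$th split channel defined in~(\ref{eq:split-channels}). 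The minor subtlety that $U_{1}^{i-1}$ under the scheme is not uniform over $\{0,1\}^{i-1}$ (because of the frozen-bit positions in $\mathcal{B}\cap[1,i-1]$) is handled exactly as in Mahdavifar--Vardy, by first drawing the frozen vector uniformly at random and then conditioning, an averaging that does not change the Holevo information. Thus $I(U_{\mathcal{A}};E^{N})\leq\sum_{i\in\mathcal{A}}I(W_{N}^{\ast(i)})$.

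Next, I would invoke the commuting hypothesis. Since $\rho_{0}^{E}$ and $\rho_{1}^{E}$ commute, every tensor product and convex combination used to construct $\overline{\rho}_{u_{1}^{i}}^{E^{N}}$ in~(\ref{eq:averaged-cond-states}) is simultaneously diagonalizable, so each $W_{N}^{\ast(i)}$ is effectively classical, $\sqrt{F(W_{N}^{\ast(i)})}$ coincides with the classical Bhattacharyya parameter $Z(W_{N}^{\ast(i)})$, and its Holevo information reduces to the classical symmetric mutual information. Membership in $\mathcal{A}\subseteq\mathcal{P}(W^{\ast},\beta)$ forces $\sqrt{F(W_{N}^{\ast(i)})}>1-2^{-N^{\beta}}$, and Arikan's classical inequality $I(W)\leq\sqrt{1-Z(W)^{2}}\leq\sqrt{2(1-Z(W))}$ then gives $I(W_{N}^{\ast(i)})\leq\sqrt{2}\cdot 2^{-N^{\beta}/2}$. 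Summing over $i\in\mathcal{A}$ (with $|\mathcal{A}|\leq N$) and absorbing the polynomial prefactor into the asymptotic expression yields the claimed $I(U_{\mathcal{A}};E^{N})=o(2^{-\frac{1}{2}N^{\beta}})$.

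The main obstacle sits in the second step: the quantitative passage from ``fidelity near one'' to ``Holevo information near zero'' is what breaks in the non-commuting case, since no polynomial-strength inequality of the form ``$1-\sqrt{F(W)}\leq\varepsilon$ implies $I(W)\leq\mathrm{poly}(\varepsilon)$'' is known to hold uniformly for genuinely quantum binary-input channels. This is precisely why the hypothesis restricts $W^{\ast}$ to have a classical output, and is the same hypothesis that Proposition~\ref{prop:poor-channel-fraction} needs to guarantee that $|\mathcal{P}(W^{\ast},\beta)|/N\to 1-I(W^{\ast})$; the chain-rule reduction and the counting estimate are otherwise standard and follow Mahdavifar--Vardy almost verbatim.
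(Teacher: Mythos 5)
Your chain-rule decomposition, the reduction of each summand to the Holevo information $I(W_{N}^{\ast(i)})$ of Eve's $i$th split channel, and the final counting are all essentially identical to the paper's argument, so the substance of your proof is correct. Where you diverge is the quantitative step from ``fidelity near one'' to ``Holevo information near zero'': you first use the commuting hypothesis to reduce each $W_{N}^{\ast(i)}$ to a genuinely classical channel, identify $\sqrt{F}$ with the Bhattacharyya parameter $Z$, and then invoke Arikan's classical inequality $I(W)\leq\sqrt{1-Z(W)^{2}}$. The paper instead invokes its Proposition~1 from Ref.~\cite{WG11}, which is the direct quantum generalization $I(W)\leq\sqrt{1-F(W)}$ for arbitrary binary-input classical-quantum channels; since $F(W)=Z(W)^{2}$ in the commuting case, the two routes produce the same numerical bound, but the paper's version needs no commutativity at this step.

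This means your concluding diagnosis is mistaken. You assert that the classical-output hypothesis on $W^{\ast}$ is what rescues the passage from $1-\sqrt{F}\leq\varepsilon$ to $I\leq\mathrm{poly}(\varepsilon)$, claiming no such polynomial-strength inequality is known for genuinely quantum binary-input channels. But the inequality $I(W)\leq\sqrt{1-F(W)}$ does hold for all binary-input cq channels (it is exactly Proposition~1 of Ref.~\cite{WG11}, cited in the paper's proof), so the strong-security bound would go through verbatim without any commutativity assumption on Eve's outputs. The commuting hypothesis is genuinely required elsewhere: it is what lets Proposition~\ref{prop:poor-channel-fraction} establish that the fraction $|\mathcal{P}(W^{\ast},\beta)|/N$ converges to $1-I(W^{\ast})$, which is needed for the rate theorem and the vanishing secret-key rate, because the critical super-martingale inequality~(\ref{eq:critical-inequality}) can fail for non-commuting pure environment states (Appendix~\ref{app:pure-state-combining}). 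In other words, classicality of Eve's environment controls how \emph{many} of her channels are bad, not how secure the bad ones are. Apart from this misattribution, your argument is sound, and your handling of the non-uniformity of the frozen bits via a random-frozen-vector averaging is a reasonable filling of a gap the paper's proof leaves implicit.
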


\begin{proof}
Consider that%
\begin{align*}
I\left(  U_{\mathcal{A}};E^{n}\right)   &  =\sum_{i\in\mathcal{A}}I\left(
U_{i};E^{n}|U_{\mathcal{A}_{i}^{-}}\right) \\
&  =\sum_{i\in\mathcal{A}}I\left(  U_{i};E^{n}U_{\mathcal{A}_{i}^{-}}\right)
\\
&  \leq\sum_{i\in\mathcal{A}}I\left(  U_{i};E^{n}U_{1}^{i-1}\right) \\
&  =\sum_{i\in\mathcal{A}}I(W_{N}^{\ast\left(  i\right)  })
\end{align*}
The first equality is from the chain rule for quantum mutual information and
by defining $\mathcal{A}_{i}^{-}$ to be the indices in $\mathcal{A}$ preceding
$i$. The second equality follows from the assumption that the bits in
$U_{\mathcal{A}_{i}^{-}}$ are chosen uniformly at random. The firts inequality
is from quantum data processing. The third equality is from the definition of
the synthesized channels $W_{N}^{\ast\left(  i\right)  }$. Continuing, we have%
\begin{align*}
&  \leq\sum_{i\in\mathcal{A}}\sqrt{1-F(W_{N}^{\ast\left(  i\right)  })}\\
&  \leq\sum_{i\in\mathcal{A}}\sqrt{1-\left(  1-2^{-N^{\beta}}\right)  ^{2}}\\
&  =o\left(  2^{-\frac{1}{2}N^{\beta}}\right)  .
\end{align*}
The first inequality is from Proposition~1 in Ref.~\cite{WG11}. The final
inequality follows from the definition of the set $\mathcal{A}$.
\end{proof}

We also know that the code has good reliability, in the sense that there
exists a POVM\ $\left\{  \Lambda_{u_{\mathcal{A}},u_{\mathcal{Y}}}^{\left(
u_{\mathcal{X}}\right)  }\right\}  $ such that%
\begin{align*}
\Pr\{\widehat{U}_{\mathcal{A}\cup\mathcal{Y}}\neq U_{\mathcal{A}%
\cup\mathcal{Y}}\}  &  \leq2\sqrt{\sum_{i\in\mathcal{A}\cup\mathcal{Y}}%
\frac{1}{2}\sqrt{F(W_{N}^{\left(  i\right)  })}}\\
&  =o\left(  2^{-\frac{1}{2}N^{\beta}}\right)  .
\end{align*}
This POVM\ is the quantum successive cancellation decoder established in
Ref.~\cite{WG11}. The quantum successive cancellation decoder operates exactly
as before, but it needs to decode both the information bits in $\mathcal{A}$
and the randomized bits in $\mathcal{Y}$. It also exploits the frozen bits in
$\mathcal{B}$ and the secret key bits in $\mathcal{X}$ to help with decoding.

Finally, we can prove that the rate of secret key bits required by the scheme
vanishes in the limit as $N$ becomes large:

\begin{proposition}
For the quantum wiretap polar coding scheme given above, a degraded quantum
wiretap channel with $W$ and $W^{\ast}$ as defined in (\ref{eq:channel-to-Bob}%
-\ref{eq:channel-to-Eve}), with $W^{\ast}$ having a classical output, the rate
$\left\vert \mathcal{X}\right\vert /N$ vanishes as $N$ becomes large:%
\[
\lim_{N\rightarrow\infty}\frac{\left\vert \mathcal{X}\right\vert }{N}=0.
\]

\end{proposition}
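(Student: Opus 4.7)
The plan is to bound $\mathcal{X}$ by a ``middle band'' of Eve's synthesized channels, whose size I can show vanishes by comparing the two polarization statements (Theorem~\ref{thm:fraction-good} and Proposition~\ref{prop:poor-channel-fraction}) applied to $W^{\ast}$. Both of these apply to $W^{\ast}$ because the wiretap channel is degraded with classical environment, so $\rho_{0}^{E}$ and $\rho_{1}^{E}$ commute, which is exactly the hypothesis needed for Proposition~\ref{prop:poor-channel-fraction}.

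First, I would spell out what membership in $\mathcal{X}$ means: $i \in \mathcal{X}$ iff $\sqrt{F(W_{N}^{(i)})} \geq 2^{-N^{\beta}}$ (bad for Bob) and $\sqrt{F(W_{N}^{\ast(i)})} \leq 1 - 2^{-N^{\beta}}$ (not poor for Eve). Next I would invoke Lemma~\ref{lem:degrade}: since $W^{\ast}$ is a degraded version of $W$, we have $F(W_{N}^{(i)}) \leq F(W_{N}^{\ast(i)})$ for every $i$, so the Bob-side bound propagates to give $\sqrt{F(W_{N}^{\ast(i)})} \geq 2^{-N^{\beta}}$. Therefore
\[
\mathcal{X} \subseteq \mathcal{M}_{N} \equiv \bigl\{ i \in [N] : 2^{-N^{\beta}} \leq \sqrt{F(W_{N}^{\ast(i)})} \leq 1 - 2^{-N^{\beta}} \bigr\}.
\]

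Now I would apply the two polarization theorems to the channel $W^{\ast}$. Theorem~\ref{thm:fraction-good} applied to $W^{\ast}$ gives
\[
\lim_{N\to\infty} \frac{1}{N}\bigl|\{ i : \sqrt{F(W_{N}^{\ast(i)})} < 2^{-N^{\beta}} \}\bigr| = I(W^{\ast}),
\]
while Proposition~\ref{prop:poor-channel-fraction} (whose hypothesis holds by the classical-environment assumption) gives
\[
\lim_{N\to\infty} \frac{1}{N}\bigl|\{ i : \sqrt{F(W_{N}^{\ast(i)})} > 1 - 2^{-N^{\beta}} \}\bigr| = 1 - I(W^{\ast}).
\]
These two index sets are disjoint, together with $\mathcal{M}_{N}$ they exhaust $[N]$ up to boundary overlaps, and the limiting fractions sum to $1$. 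Therefore $|\mathcal{M}_{N}|/N \to 0$, and combining with the containment $\mathcal{X} \subseteq \mathcal{M}_{N}$ yields $\lim_{N\to\infty} |\mathcal{X}|/N = 0$.

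There is essentially no obstacle; all the heavy lifting is already in Proposition~\ref{prop:poor-channel-fraction}, whose proof establishes polarization at the sharper rate $1 - 2^{-N^{\beta}}$. The one thing to be careful about is that one must use the \emph{same} exponent $\beta$ on both sides when defining $\mathcal{P}(W^{\ast},\beta)$ and $\mathcal{B}_{N}(W,\beta)$ so that the degradation monotonicity $F(W_{N}^{(i)}) \leq F(W_{N}^{\ast(i)})$ from Lemma~\ref{lem:degrade} translates directly into the containment $\mathcal{X} \subseteq \mathcal{M}_{N}$; with matching thresholds this is immediate.
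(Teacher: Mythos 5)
Your proposal is correct and is essentially the paper's proof repackaged: the containment $\mathcal{X} \subseteq \mathcal{M}_{N}$ is exactly the pairwise disjointness of $\mathcal{X}$, $\mathcal{G}_{N}(W^{\ast},\beta)$, and $\mathcal{P}(W^{\ast},\beta)$ that the paper establishes (using the same ingredients: Lemma~\ref{lem:degrade} for the Bob-to-Eve step, Theorem~\ref{thm:fraction-good} and Proposition~\ref{prop:poor-channel-fraction} applied to $W^{\ast}$). The paper then bounds $|\mathcal{X}|/N + |\mathcal{G}_{N}(W^{\ast},\beta)|/N + |\mathcal{P}(W^{\ast},\beta)|/N \leq 1$ and passes to the limit, which is the same computation as showing $|\mathcal{M}_{N}|/N \to 0$.
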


\begin{proof}
This result follows by an argument similar to that for Proposition~22 in
Ref.~\cite{MV10}, but we need to modify it slightly. We prove that the sets
$\mathcal{X}$, $\mathcal{G}_{N}\left(  W^{\ast},\beta\right)  $, and
$\mathcal{P}_{N}\left(  W^{\ast},\beta\right)  $ are pairwise disjoint (note
that we define the set $\mathcal{G}_{N}\left(  W^{\ast},\beta\right)  $ as in
(\ref{eq:good-channels}), but with respect to the channel $W^{\ast}$). It then
follows that%
\begin{equation}
\frac{\left\vert \mathcal{X}\right\vert }{N}+\frac{\left\vert \mathcal{G}%
_{N}\left(  W^{\ast},\beta\right)  \right\vert }{N}+\frac{\left\vert
\mathcal{P}_{N}\left(  W^{\ast},\beta\right)  \right\vert }{N}\leq1.
\label{eq:bound-on-X}%
\end{equation}
So we prove that these sets are disjoint. First, consider that $\mathcal{X}$
and $\mathcal{P}_{N}\left(  W^{\ast},\beta\right)  $ are disjoint by
definition because $\mathcal{X}$ is formed from an intersection with
$\mathcal{P}_{N}^{c}\left(  W^{\ast},\beta\right)  $. Next, observe that for
sufficiently large $N$, $\mathcal{P}_{N}\left(  W^{\ast},\beta\right)  $ and
$\mathcal{G}_{N}\left(  W^{\ast},\beta\right)  $ are disjoint by definition.
Observe that $\mathcal{X}$ and $\mathcal{G}_{N}\left(  W^{\ast},\beta\right)
$ are disjoint because%
\[
\mathcal{B}_{N}\left(  W,\beta\right)  \subseteq\mathcal{B}_{N}\left(
W^{\ast},\beta\right)  ,
\]
which follows from $W^{\ast}$ being a degraded version of $W^{\ast}$ and
Lemma~\ref{lem:degrade} (also, $\mathcal{B}_{N}\left(  W^{\ast},\beta\right)
$ is the complement of $\mathcal{G}_{N}\left(  W^{\ast},\beta\right)  $). By
applying Proposition~\ref{prop:poor-channel-fraction}, we know that%
\[
\lim_{N\rightarrow\infty}\frac{\left\vert \mathcal{P}\left(  W^{\ast}%
,\beta\right)  \right\vert }{N}=1-I\left(  W^{\ast}\right)  ,
\]
and from Theorem~\ref{thm:fraction-good}, we know that%
\[
\lim_{N\rightarrow\infty}\frac{\left\vert \mathcal{G}_{N}\left(  W^{\ast
},\beta\right)  \right\vert }{N}=I\left(  W^{\ast}\right)  .
\]
Thus, the statement of the proposition follows from (\ref{eq:bound-on-X}) and
the above asymptotic limits.
\end{proof}

\section{Quantum polar codes}

\label{sec:quantum-polar}From such a scheme for private classical
communication over a quantum wiretap channel with classical environment, we
can readily construct a quantum polar code achieving the coherent information
of a degradable quantum channel by exploiting Devetak's ideas for quantum
coding \cite{D03} and the recent quantum successive cancellation decoder from
Ref.~\cite{WG11}. First, recall that a quantum channel $W$ is specified by a
completely-positive trace-preserving map (we consider quantum channels with
qubit inputs in this work). Any such map has a dilation to a larger system in
which the dynamics over a tensor product space are unitary, i.e., it holds
that%
\[
W\left(  \rho\right)  =\text{Tr}_{E}\left\{  U_{W}^{A^{\prime}\rightarrow
BE}\rho(U_{W}^{A^{\prime}\rightarrow BE})^{\dag}\right\}  ,
\]
where $U_{W}^{A^{\prime}\rightarrow BE}$ is the isometric extension of the
channel $W$. The complementary channel $W^{\ast}$ is the map obtained by
tracing over Bob's system%
\[
W^{\ast}\left(  \rho\right)  =\text{Tr}_{B}\left\{  U_{W^{\ast}}^{A^{\prime
}\rightarrow BE}\rho(U_{W^{\ast}}^{A^{\prime}\rightarrow BE})^{\dag}\right\}
.
\]
Such a realization makes the quantum coding setting analogous to the quantum
wiretap setting. We also define the symmetric coherent information of the
channel as%
\[
I_{c}\left(  W\right)  =H\left(  B\right)  -H\left(  AB\right)  ,
\]
where the entropies result from sending half of a maximally entangled Bell
state through the input of the channel. It is straightforward to verify that%
\[
I_{c}\left(  W\right)  =I\left(  W\right)  -I\left(  W^{\ast}\right)  .
\]
Our strategy for achieving the coherent information is to operate the quantum
wiretap polar code in superposition (just as Devetak does~\cite{D03}).

We can similarly specify a quantum polar code by the parameter vector $\left(
N,K,\mathcal{A},\mathcal{B},\mathcal{X},\mathcal{Y},u_{\mathcal{B}}\right)  $
where these parameters are all the same as in the quantum wiretap polar code.
The encoder is a coherent version of Arikan's encoder where we replace
classical CNOT\ gates with quantum CNOT gates that act as follows on a
two-qubit state:%
\begin{equation}
\sum_{x,y}\alpha_{x,y}\left\vert x\right\rangle \left\vert y\right\rangle
\rightarrow\sum_{x,y}\alpha_{x,y}\left\vert x\right\rangle \left\vert x\oplus
y\right\rangle . \label{eq:CNOT-gate}%
\end{equation}
It is important to choose the orthonormal basis for the CNOT to be the one
such that the induced states for the environment commute. That is, consider
the following induced classical-quantum channel for the environment Eve:%
\[
x\rightarrow W^{\ast}\left(  \left\vert x\right\rangle \left\langle
x\right\vert \right)  \equiv\rho_{x}^{E},
\]
where the basis $\left\{  \left\vert x\right\rangle \right\}  $ is the same as
in (\ref{eq:CNOT-gate}). The scheme in this section works at the claimed rates
if $\rho_{0}^{E}$ and $\rho_{1}^{E}$ commute (so that we can exploit the
result in Proposition~\ref{prop:poor-channel-fraction}). We prove in
Appendix~\ref{app:channel-examples} that many important channels satisfy this criterion.

We can now state our quantum polar coding theorem:

\begin{theorem}
[Quantum Polar Coding]For any degradable qubit-input quantum channel $W$ with
classical environment, there exists a quantum polar coding scheme for
entanglement generation that achieves the symmetric coherent information, in
the sense that the fidelity between the input entanglement and the generated
entanglement is equal to $1-o\left(  2^{-\frac{1}{4}N^{\beta}}\right)  $ where
$N$ is the blocklength of the code and $\beta$ is some real such that
$0<\beta<1/2$. The scheme may require entanglement assistance, but the
entanglement consumption rate vanishes in the limit of large blocklength.
\end{theorem}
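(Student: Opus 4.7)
The plan is to lift the wiretap polar code of Section~\ref{sec:wiretap} to a coherent quantum code following Devetak's strategy, and then to bound the fidelity using the strong-security guarantee on Eve's side together with the reliability bound for a coherent version of the successive cancellation decoder. Concretely, I would replace each classical input with its coherent analog: the $A$ halves of $|\mathcal{A}|$ maximally entangled pairs (with an external reference $R$) are fed into the channels indexed by $\mathcal{A}$; the frozen state $|u_{\mathcal{B}}\rangle$ is sent through $\mathcal{B}$; half of $|\mathcal{X}|$ pre-shared Bell pairs (the entanglement assistance, playing the role of the secret key) is sent through $\mathcal{X}$; and the state $|+\rangle^{\otimes|\mathcal{Y}|}$ is sent through $\mathcal{Y}$. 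These registers are processed by the Arikan network $G_{N}$, implemented with quantum CNOTs in the basis in which the environment is commuting, and then passed through $N$ copies of the isometric extension $U_{W}^{A'\to BE}$.

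Bob, holding $B^{N}$ together with his halves of the $\mathcal{X}$ Bell pairs (which coherently carry the ``secret key'' value $u_{\mathcal{X}}$), runs a coherent implementation of the quantum successive cancellation decoder of Ref.~\cite{WG11} for the wiretap code, writing the decoded indices into ancilla registers. By the reliability bound recalled at the end of Section~\ref{sec:wiretap}, the corresponding projective measurement would succeed with probability $1-o(2^{-N^{\beta}/2})$, so by the gentle-measurement lemma the coherent version perturbs the global pure state by at most trace distance $o(2^{-N^{\beta}/4})$. Bob then applies the inverse Arikan network to the decoded registers and controlled unitaries that uncompute the $\mathcal{Y}$ register back to $|+\rangle^{\otimes|\mathcal{Y}|}$; this is the controlled decoupling step alluded to in the introduction.

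The key nontrivial step is to convert the classical strong-security bound $I(U_{\mathcal{A}};E^{N}) = o(2^{-N^{\beta}/2})$ from Section~\ref{sec:wiretap} into a quantum decoupling statement. By quantum Pinsker's inequality, the joint state on $R^{|\mathcal{A}|} E^{N}$ obtained from running the coherent scheme is $o(2^{-N^{\beta}/4})$-close in trace distance to a product state between the maximally mixed reference and some fixed state on Eve. Since the overall state on all systems is pure, Uhlmann's theorem then guarantees an isometry on Bob's side producing fidelity $1-o(2^{-N^{\beta}/4})$ with the original maximally entangled state on $R^{|\mathcal{A}|} A^{|\mathcal{A}|}$, and the explicit coherent-decode-then-uncompute circuit described above is precisely this Uhlmann isometry, up to the already-accounted-for gentle-measurement correction. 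This is the quantum counterpart of Devetak's privacy-to-coherence conversion.

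The rate and entanglement-consumption claims are then immediate: $|\mathcal{A}|/N \to I(W) - I(W^{\ast}) = I_{c}(W)$ by the rate theorem of Section~\ref{sec:wiretap}, and $|\mathcal{X}|/N \to 0$ by the closing proposition of that section, so the entanglement consumption rate vanishes. The main obstacle I expect is the decoupling step: one has to check carefully that the strong-security bound remains valid when Eve is additionally given the frozen bits $u_{\mathcal{B}}$ and a purification of the assistance register, and that the controlled unitaries used to uncompute $\mathcal{Y}$ genuinely implement the Uhlmann isometry rather than merely approximate it; once that bookkeeping is in place, the rest of the argument is assembled from the previously established reliability, security, and rate statements.
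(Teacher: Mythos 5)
Your proposal follows the paper's proof essentially step for step: the same four-way split of inputs (Bell-pair halves through $\mathcal{A}$, frozen ancillas through $\mathcal{B}$, pre-shared ebit halves through $\mathcal{X}$, and $\left\vert +\right\rangle$ states through $\mathcal{Y}$), the same coherent Arikan CNOT encoder in the basis making Eve classical, the same two-stage decoder (a coherent version of the successive cancellation decoder from Ref.~\cite{WG11} followed by controlled decoupling unitaries), and the same privacy-to-coherence conversion via Pinsker's inequality applied to $I(U_{\mathcal{A}};E^{n})=o(2^{-N^{\beta}/2})$, then Uhlmann's theorem, yielding the final fidelity $1-o(2^{-N^{\beta}/4})$. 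The rate and vanishing entanglement-consumption claims are imported from Section~\ref{sec:wiretap} exactly as in the paper. Two differences are worth flagging. First, you handle the decoder-stage error via the gentle measurement lemma, whereas the paper computes the overlap $\left\langle \varphi_{u_{\mathcal{Y}}}|\chi_{u_{\mathcal{Y}}}\right\rangle$ directly and invokes Lemma~4 of Ref.~\cite{D03}, inserting explicit relative phases $\gamma_{u_{\mathcal{Y}}}$ at the encoder and $\delta_{u_{\mathcal{Y}}}$ at the target; both routes yield the same asymptotics, and yours is arguably cleaner since it avoids the phase bookkeeping, but you should at least note that the coherent gentle-measurement argument is doing the work that the phases do in the paper. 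Second, you assert that ``the explicit coherent-decode-then-uncompute circuit ... is precisely this Uhlmann isometry'' and propose that it consists of an inverse Arikan network plus uncomputation of $\mathcal{Y}$; the paper makes no such structural claim. It obtains the controlled unitaries $U_{u_{\mathcal{A}}}$ abstractly from Uhlmann's theorem (applied per $u_{\mathcal{A}}$, with Bob holding the purifying systems $B^{N}$ and $\widehat{B}$) and simply applies $\sum_{u_{\mathcal{A}}}\left\vert u_{\mathcal{A}}\right\rangle\left\langle u_{\mathcal{A}}\right\vert\otimes U_{u_{\mathcal{A}}}$ as a second, separate stage after the coherent SC decoder. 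Your stronger claim is not needed for the theorem and is not obviously true; since only existence of the isometries matters, the safest version of your argument drops the explicit-circuit claim and keeps the $U_{u_{\mathcal{A}}}$ abstract, as the paper does. With that adjustment, your concerns at the end (validity of strong security with the side information, and whether the decoupling unitary is exact) are addressed: the frozen bits are public and fixed, the $\mathcal{X}$-register marginal seen by Eve is uniform exactly as in the wiretap analysis, and Uhlmann's theorem supplies exact isometries achieving the stated fidelity, so no further approximation is incurred at that stage.
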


\begin{proof}
We assume that our task is merely to generate entanglement between Alice and
Bob.\footnote{The task of entanglement generation is equivalent to the task of
quantum communication if forward classical communication from sender to
receiver is available. Furthermore, forward classical communication does not
increase the capacity of a quantum channel~\cite{BKN98,W11}.} Alice begins by
preparing Bell states locally on her side of the channel. Also, we assume that
Alice and Bob share a small number of ebits before communication begins. We
have the following structure for our code:

\begin{enumerate}
\item Alice sends half of the locally prepared Bell states through the
channels in $\mathcal{A}$.

\item Alice sends the frozen ancilla qubits $\left\vert u_{\mathcal{B}%
}\right\rangle $ through the channels in $\mathcal{B}$.

\item Alice sends $\left\vert \pm \right\rangle =\left(  \left\vert
0\right\rangle \pm \left\vert 1\right\rangle \right)  /\sqrt{2}$ states through
the channels in $\mathcal{Y}$ (these are bits frozen in the Hadamard basis).

\item Alice sends her shares of the ebits through the channels in
$\mathcal{X}$.
\end{enumerate}

Thus, the state before it is input to the encoder is as follows:%
\[
\frac{1}{\sqrt{2^{\left\vert A\right\vert }}}\sum_{u_{\mathcal{A}}}\left\vert
u_{\mathcal{A}}\right\rangle \left\vert u_{\mathcal{A}}\right\rangle
\left\vert u_{\mathcal{B}}\right\rangle \frac{1}{\sqrt{2^{\left\vert
\mathcal{Y}\right\vert }}}\sum_{u_{\mathcal{Y}}}\left\vert u_{\mathcal{Y}%
}\right\rangle \frac{1}{\sqrt{2^{\left\vert \mathcal{X}\right\vert }}}%
\sum_{u_{\mathcal{X}}}\left\vert u_{\mathcal{X}}\right\rangle \left\vert
u_{\mathcal{X}}\right\rangle ,
\]
where Alice possesses both shares of $\frac{1}{\sqrt{2^{\left\vert
A\right\vert }}}\sum_{u_{\mathcal{A}}}\left\vert u_{\mathcal{A}}\right\rangle
\left\vert u_{\mathcal{A}}\right\rangle $, Alice possesses $\left\vert
u_{\mathcal{B}}\right\rangle $ and the superposed state $\frac{1}%
{\sqrt{2^{\left\vert \mathcal{Y}\right\vert }}}\sum_{u_{\mathcal{Y}}%
}\left\vert u_{\mathcal{Y}}\right\rangle $, and Alice and Bob share the
entangled state $\frac{1}{\sqrt{2^{\left\vert \mathcal{X}\right\vert }}}%
\sum_{u_{\mathcal{X}}}\left\vert u_{\mathcal{X}}\right\rangle \left\vert
u_{\mathcal{X}}\right\rangle $. We can also write the above state as%
\[
\frac{1}{\sqrt{2^{\left\vert \mathcal{A}\right\vert +\left\vert \mathcal{Y}%
\right\vert +\left\vert \mathcal{X}\right\vert }}}\sum_{u_{\mathcal{A}%
},u_{\mathcal{Y}},u_{\mathcal{X}}}\left\vert u_{\mathcal{A}}\right\rangle
\left\vert u_{\mathcal{A}}\right\rangle \left\vert u_{\mathcal{B}%
}\right\rangle \left\vert u_{\mathcal{Y}}\right\rangle \left\vert
u_{\mathcal{X}}\right\rangle \left\vert u_{\mathcal{X}}\right\rangle ,
\]
and we furthermore require Alice to apply some gates that realize some
relative phases $\gamma_{u_{\mathcal{Y}}}$ so that the above state becomes%
\[
\frac{1}{\sqrt{2^{\left\vert \mathcal{A}\right\vert +\left\vert \mathcal{Y}%
\right\vert +\left\vert \mathcal{X}\right\vert }}}\sum_{u_{\mathcal{A}%
},u_{\mathcal{Y}},u_{\mathcal{X}}}e^{i\gamma_{u_{\mathcal{Y}}}}\left\vert
u_{\mathcal{A}}\right\rangle \left\vert u_{\mathcal{A}}\right\rangle
\left\vert u_{\mathcal{B}}\right\rangle \left\vert u_{\mathcal{Y}%
}\right\rangle \left\vert u_{\mathcal{X}}\right\rangle \left\vert
u_{\mathcal{X}}\right\rangle .
\]
(It is possible to realize these phases with only linear overhead in the
encoding. Also, we specify how to choose these phases later.) Alice then
applies a coherent version of Arikan's CNOT\ encoder \cite{A09}, leading to
the following encoded state:%
\[
\frac{1}{\sqrt{2^{\left\vert \mathcal{A}\right\vert }}}\sum_{u_{\mathcal{A}}%
}\left\vert u_{\mathcal{A}}\right\rangle \left\vert \phi_{u_{\mathcal{A}}%
}\right\rangle ,
\]
where $\left\{  \left\vert \phi_{u_{\mathcal{A}}}\right\rangle \right\}
_{u_{\mathcal{A}}}$ are the \textit{entanglement-assisted quantum codewords},
given by%
\begin{align*}
\left\vert \phi_{u_{\mathcal{A}}}\right\rangle  &  =\frac{1}{\sqrt
{2^{\left\vert \mathcal{Y}\right\vert +\left\vert \mathcal{X}\right\vert }}%
}\sum_{u_{\mathcal{Y}},u_{\mathcal{X}}}e^{i\gamma_{u_{\mathcal{Y}}}}\left\vert
\psi_{u_{\mathcal{A}},u_{\mathcal{B}},u_{\mathcal{Y}},u_{\mathcal{X}}%
}\right\rangle \left\vert u_{\mathcal{X}}\right\rangle \\
&  \equiv U\frac{1}{\sqrt{2^{\left\vert \mathcal{Y}\right\vert +\left\vert
\mathcal{X}\right\vert }}}\sum_{u_{\mathcal{Y}},u_{\mathcal{X}}}%
e^{i\gamma_{u_{\mathcal{Y}}}}\left\vert u_{\mathcal{A}}\right\rangle
\left\vert u_{\mathcal{B}}\right\rangle \left\vert u_{\mathcal{Y}%
}\right\rangle \left\vert u_{\mathcal{X}}\right\rangle \left\vert
u_{\mathcal{X}}\right\rangle ,
\end{align*}
with the coherent Arikan CNOT\ encoder $U$ acting only on Alice's registers.
Observe that the above state is encoded in a Calderbank-Shor-Steane code
\cite{CS96,S96}\ (modulo the relative phases) because the inputs are either
information qubits, ancillas in a fixed state $\left\vert 0\right\rangle $ or
$\left\vert 1\right\rangle $, ancillas in a state $\left\vert +\right\rangle
$, or shares of ebits, and furthermore, the encoder consists of just
SWAP\ gates and CNOT\ gates. Alice transmits the register containing the
states $\left\vert \phi_{u_{\mathcal{A}}}\right\rangle $ over the quantum
channel, leading to the following state%
\[
\frac{1}{\sqrt{2^{\left\vert \mathcal{A}\right\vert }}}\sum_{u_{\mathcal{A}}%
}\left\vert u_{\mathcal{A}}\right\rangle \left\vert \phi_{u_{\mathcal{A}}%
}\right\rangle ^{B^{N}E^{N}},
\]
where%
\[
\left\vert \phi_{u_{\mathcal{A}}}\right\rangle ^{B^{N}E^{N}}\equiv
U_{W}^{A^{N}\rightarrow B^{N}E^{N}}\left\vert \phi_{u_{\mathcal{A}}%
}\right\rangle ^{A^{N}}.
\]
Bob then applies the following coherent quantum successive cancellation
decoder to his systems $B^{N}$ and his half of the entanglement
$B^{\mathcal{X}}$:%
\begin{equation}
\sum_{u_{\mathcal{A}},u_{\mathcal{Y}},u_{\mathcal{X}}}\sqrt{\Lambda
_{u_{\mathcal{A}},u_{\mathcal{Y}}}^{\left(  u_{\mathcal{X}}\right)  }}^{B^{N}%
}\otimes\left\vert u_{\mathcal{X}}\right\rangle \left\langle u_{\mathcal{X}%
}\right\vert \otimes\left\vert u_{\mathcal{A}},u_{\mathcal{Y}}\right\rangle
^{\widehat{B}}.\label{eq:coherent-decoder}%
\end{equation}
The POVM\ elements $\left\{  \Lambda_{u_{\mathcal{A}},u_{\mathcal{Y}}%
}^{\left(  u_{\mathcal{X}}\right)  }\right\}  _{u_{\mathcal{A}},u_{\mathcal{Y}%
},u_{\mathcal{X}}}$ are the same as those in the incoherent quantum successive
cancellation decoder from Ref.~\cite{WG11}. Placing them in the above
operation allows this first part of the decoder to be an isometry and for the
code to thus operate in superposition. We claim that the following state has
fidelity $1-o\left(  2^{-\frac{1}{2}N^{\beta}}\right)  $ with the state after
Bob applies the above coherent quantum successive cancellation decoder:%
\begin{multline}
\frac{1}{\sqrt{2^{\left\vert \mathcal{A}\right\vert +\left\vert \mathcal{Y}%
\right\vert +\left\vert \mathcal{X}\right\vert }}}\sum_{u_{\mathcal{A}%
},u_{\mathcal{Y}},u_{\mathcal{X}}}e^{i\delta_{u_{\mathcal{Y}}}}\left\vert
u_{\mathcal{A}}\right\rangle \left\vert \psi_{u_{\mathcal{A}},u_{\mathcal{B}%
},u_{\mathcal{Y}},u_{\mathcal{X}}}\right\rangle ^{B^{N}E^{N}}\otimes
\label{eq:detected-state}\\
\left\vert u_{\mathcal{X}}\right\rangle \left\vert u_{\mathcal{A}%
},u_{\mathcal{Y}}\right\rangle ^{\widehat{B}},
\end{multline}
where we specify the phases $e^{i\delta_{u_{\mathcal{Y}}}}$ later. Indeed,
consider the following states%
\begin{align*}
\left\vert \chi_{u_{\mathcal{Y}}}\right\rangle  &  \equiv\frac{1}%
{\sqrt{2^{\left\vert \mathcal{A}\right\vert +\left\vert \mathcal{X}\right\vert
}}}\sum_{u_{\mathcal{A}},u_{\mathcal{X}},u_{\mathcal{A}}^{\prime
},u_{\mathcal{Y}}^{\prime}}\left\vert u_{\mathcal{A}}\right\rangle \otimes\\
&  \ \ \ \ \sqrt{\Lambda_{u_{\mathcal{A}}^{\prime},u_{\mathcal{Y}}^{\prime}%
}^{\left(  u_{\mathcal{X}}\right)  }}^{B^{N}}\left\vert \psi_{u_{\mathcal{A}%
},u_{\mathcal{B}},u_{\mathcal{Y}},u_{\mathcal{X}}}\right\rangle ^{B^{N}E^{N}%
}\left\vert u_{\mathcal{X}}\right\rangle \left\vert u_{\mathcal{A}}^{\prime
},u_{\mathcal{Y}}^{\prime}\right\rangle ^{\widehat{B}},\\
\left\vert \varphi_{u_{\mathcal{Y}}}\right\rangle  &  \equiv\frac{1}%
{\sqrt{2^{\left\vert \mathcal{A}\right\vert +\left\vert \mathcal{X}\right\vert
}}}\sum_{u_{\mathcal{A}},u_{\mathcal{X}}}\left\vert u_{\mathcal{A}%
}\right\rangle \left\vert \psi_{u_{\mathcal{A}},u_{\mathcal{B}},u_{\mathcal{Y}%
},u_{\mathcal{X}}}\right\rangle ^{B^{N}E^{N}}\otimes\\
&  \ \ \ \ \left\vert u_{\mathcal{X}}\right\rangle \left\vert u_{\mathcal{A}%
},u_{\mathcal{Y}}\right\rangle ^{\widehat{B}}.
\end{align*}
Their overlap is as follows:%
\begin{multline*}
\left\langle \varphi_{u_{\mathcal{Y}}}|\chi_{u_{\mathcal{Y}}}\right\rangle
=\frac{1}{2^{\left\vert \mathcal{A}\right\vert +\left\vert \mathcal{X}%
\right\vert }}\sum_{u_{\mathcal{A}}^{\prime\prime},u_{\mathcal{X}}%
^{\prime\prime},u_{\mathcal{A}},u_{\mathcal{X}},u_{\mathcal{A}}^{\prime
},u_{\mathcal{Y}}^{\prime}}\\
\left(  \left\langle u_{\mathcal{A}}^{\prime\prime}\right\vert \left\langle
\psi_{u_{\mathcal{A}}^{\prime\prime},u_{\mathcal{B}},u_{\mathcal{Y}%
},u_{\mathcal{X}}^{\prime\prime}}\right\vert ^{B^{N}E^{N}}\left\langle
u_{\mathcal{X}}^{\prime\prime}\right\vert \left\langle u_{\mathcal{A}}%
^{\prime\prime},u_{\mathcal{Y}}\right\vert ^{\widehat{B}}\right)  \times\\
\left(  \sqrt{\Lambda_{u_{\mathcal{A}}^{\prime},u_{\mathcal{Y}}^{\prime}%
}^{\left(  u_{\mathcal{X}}\right)  }}^{B^{N}}\left\vert u_{\mathcal{A}%
}\right\rangle \left\vert \psi_{u_{\mathcal{A}},u_{\mathcal{B}},u_{\mathcal{Y}%
},u_{\mathcal{X}}}\right\rangle ^{B^{N}E^{N}}\left\vert u_{\mathcal{X}%
}\right\rangle \left\vert u_{\mathcal{A}}^{\prime},u_{\mathcal{Y}}^{\prime
}\right\rangle ^{\widehat{B}}\right)
\end{multline*}%
\begin{multline*}
=\frac{1}{2^{\left\vert \mathcal{A}\right\vert +\left\vert \mathcal{X}%
\right\vert }}\sum_{u_{\mathcal{A}}^{\prime\prime},u_{\mathcal{X}}%
^{\prime\prime},u_{\mathcal{A}},u_{\mathcal{X}},u_{\mathcal{A}}^{\prime
},u_{\mathcal{Y}}^{\prime}}\left\langle u_{\mathcal{A}}^{\prime\prime
}|u_{\mathcal{A}}\right\rangle \times\\
\left\langle \psi_{u_{\mathcal{A}}^{\prime\prime},u_{\mathcal{B}%
},u_{\mathcal{Y}},u_{\mathcal{X}}^{\prime\prime}}\right\vert \sqrt
{\Lambda_{u_{\mathcal{A}}^{\prime},u_{\mathcal{Y}}^{\prime}}^{\left(
u_{\mathcal{X}}\right)  }}^{B^{N}}\left\vert \psi_{u_{\mathcal{A}%
},u_{\mathcal{B}},u_{\mathcal{Y}},u_{\mathcal{X}}}\right\rangle ^{B^{N}E^{N}%
}\times\\
\left\langle u_{\mathcal{X}}^{\prime\prime}|u_{\mathcal{X}}\right\rangle
\left\langle u_{\mathcal{A}}^{\prime\prime}|u_{\mathcal{A}}^{\prime
}\right\rangle \left\langle u_{\mathcal{Y}}|u_{\mathcal{Y}}^{\prime
}\right\rangle
\end{multline*}%
\begin{multline*}
=\frac{1}{2^{\left\vert \mathcal{A}\right\vert +\left\vert \mathcal{X}%
\right\vert }}\times\\
\sum_{u_{\mathcal{A}},u_{\mathcal{X}}}\left\langle \psi_{u_{\mathcal{A}%
},u_{\mathcal{B}},u_{\mathcal{Y}},u_{\mathcal{X}}}\right\vert \sqrt
{\Lambda_{u_{\mathcal{A}},u_{\mathcal{Y}}}^{\left(  u_{\mathcal{X}}\right)  }%
}^{B^{N}}\left\vert \psi_{u_{\mathcal{A}},u_{\mathcal{B}},u_{\mathcal{Y}%
},u_{\mathcal{X}}}\right\rangle ^{B^{N}E^{N}}%
\end{multline*}%
\[
\geq\frac{1}{2^{\left\vert \mathcal{A}\right\vert +\left\vert \mathcal{X}%
\right\vert }}\sum_{u_{\mathcal{A}},u_{\mathcal{X}}}\text{Tr}\left\{
\Lambda_{u_{\mathcal{A}},u_{\mathcal{Y}}}^{\left(  u_{\mathcal{X}}\right)
B^{N}}\psi_{u_{\mathcal{A}},u_{\mathcal{B}},u_{\mathcal{Y}},u_{\mathcal{X}}%
}^{B^{N}}\right\}  .
\]
We can then consider the average overlap:%
\begin{align*}
&  \frac{1}{2^{\left\vert \mathcal{Y}\right\vert }}\sum_{u_{\mathcal{Y}}%
}\left\langle \varphi_{u_{\mathcal{Y}}}|\chi_{u_{\mathcal{Y}}}\right\rangle \\
&  \geq\frac{1}{2^{\left\vert \mathcal{A}\right\vert +\left\vert
\mathcal{X}\right\vert +\left\vert \mathcal{Y}\right\vert }}\sum
_{u_{\mathcal{A}},u_{\mathcal{X}},u_{\mathcal{Y}}}\text{Tr}\left\{
\Lambda_{u_{\mathcal{A}},u_{\mathcal{Y}}}^{\left(  u_{\mathcal{X}}\right)
B^{N}}\psi_{u_{\mathcal{A}},u_{\mathcal{B}},u_{\mathcal{Y}},u_{\mathcal{X}}%
}^{B^{N}}\right\}  \\
&  \geq1-o\left(  2^{-\frac{1}{2}N^{\beta}}\right)  ,
\end{align*}
where the last inequality follows from the performance of the corresponding
quantum wiretap polar code. Thus, there must exist some phases $\delta
_{u_{\mathcal{Y}}}$ and $\gamma_{u_{\mathcal{Y}}}$ such that the state in
(\ref{eq:detected-state}) has a large overlap with the state resulting from
the action of the coherent quantum successive cancellation decoder in
(\ref{eq:coherent-decoder}) (this follows from Lemma~4 of Ref.~\cite{D03}).
The state resulting from the coherent quantum successive cancellation decoder
is then close to the state in (\ref{eq:detected-state}), which we repeat
below:%
\begin{multline*}
\frac{1}{\sqrt{2^{\left\vert \mathcal{A}\right\vert +\left\vert \mathcal{Y}%
\right\vert +\left\vert \mathcal{X}\right\vert }}}\sum_{u_{\mathcal{A}%
},u_{\mathcal{Y}},u_{\mathcal{X}}}e^{i\delta_{u_{\mathcal{Y}}}}\left\vert
u_{\mathcal{A}}\right\rangle \left\vert \psi_{u_{\mathcal{A}},u_{\mathcal{B}%
},u_{\mathcal{Y}},u_{\mathcal{X}}}\right\rangle ^{B^{N}E^{N}}\otimes\\
\left\vert u_{\mathcal{X}}\right\rangle \left\vert u_{\mathcal{A}%
},u_{\mathcal{Y}}\right\rangle ^{\widehat{B}}.
\end{multline*}
For a particular value of $u_{\mathcal{A}}$, the above state is%
\begin{multline*}
\frac{1}{\sqrt{2^{\left\vert \mathcal{Y}\right\vert +\left\vert \mathcal{X}%
\right\vert }}}\sum_{u_{\mathcal{Y}},u_{\mathcal{X}}}e^{i\delta
_{u_{\mathcal{Y}}}}\left\vert u_{\mathcal{A}}\right\rangle \left\vert
\psi_{u_{\mathcal{A}},u_{\mathcal{B}},u_{\mathcal{Y}},u_{\mathcal{X}}%
}\right\rangle ^{B^{N}E^{N}}\otimes\\
\left\vert u_{\mathcal{X}}\right\rangle \left\vert u_{\mathcal{A}%
},u_{\mathcal{Y}}\right\rangle ^{\widehat{B}}.
\end{multline*}
Tracing over Bob's systems gives the following state%
\[
\psi_{u_{\mathcal{A}}}^{E^{N}}\equiv\frac{1}{2^{\left\vert \mathcal{X}%
\right\vert +\left\vert \mathcal{Y}\right\vert }}\sum_{u_{\mathcal{X}%
},u_{\mathcal{Y}}}\psi_{u_{\mathcal{A}},u_{\mathcal{B}},u_{\mathcal{Y}%
},u_{\mathcal{X}}}^{E^{N}}.
\]
We know from the quantum wiretap polar code that the following property holds%
\[
I\left(  U_{\mathcal{A}};E^{n}\right)  =o\left(  2^{-\frac{1}{2}N^{\beta}%
}\right)  .
\]
Thus, from the fact that relative entropy upper bounds the trace distance
(Theorem~11.9.2 of Ref.~\cite{W11}), we know that%
\begin{align*}
2\ln2\sqrt{I\left(  U_{\mathcal{A}};E^{n}\right)  } &  \geq\left\Vert
\rho^{U_{\mathcal{A}}}\otimes\rho^{E^{n}}-\rho^{U_{\mathcal{A}}E^{n}%
}\right\Vert _{1}\\
&  =\frac{1}{2^{\left\vert \mathcal{A}\right\vert }}\sum_{u_{\mathcal{A}}%
}\left\Vert \psi_{u_{\mathcal{A}}}^{E^{N}}-\psi^{E^{N}}\right\Vert _{1},
\end{align*}
where%
\[
\psi^{E^{N}}\equiv\frac{1}{2^{\left\vert \mathcal{A}\right\vert }}%
\sum_{u_{\mathcal{A}}}\psi_{u_{\mathcal{A}}}^{E^{N}}.
\]
Thus, it follows that%
\[
\frac{1}{2^{\left\vert \mathcal{A}\right\vert }}\sum_{u_{\mathcal{A}}%
}\left\Vert \psi_{u_{\mathcal{A}}}^{E^{N}}-\psi^{E^{N}}\right\Vert
_{1}=o\left(  2^{-\frac{1}{4}N^{\beta}}\right)  ,
\]
and from the relationship between trace distance and fidelity~\cite{W11}, we
have that%
\[
\frac{1}{2^{\left\vert \mathcal{A}\right\vert }}\sum_{u_{\mathcal{A}}}%
F(\psi_{u_{\mathcal{A}}}^{E^{N}},\psi^{E^{N}})=1-o\left(  2^{-\frac{1}%
{4}N^{\beta}}\right)  .
\]
So, for each $u_{\mathcal{A}}$, consider that Bob possesses the purification
of the state $\psi_{u_{\mathcal{A}}}^{E^{N}}$ or $\psi^{E^{N}}$ and by
Uhlmann's theorem~\cite{U73,W11}, there exist isometries $U_{u_{\mathcal{A}}}%
$\ such that%
\[
F(\psi_{u_{\mathcal{A}}}^{E^{N}},\psi^{E^{N}})=F(U_{u_{\mathcal{A}}}%
\psi_{u_{\mathcal{A}}}^{B^{N}E^{N}}U_{u_{\mathcal{A}}}^{\dag},\psi^{B^{N}%
E^{N}}),
\]
for all $u_{\mathcal{A}}$, where we observe that $\psi_{u_{\mathcal{A}}%
}^{B^{N}E^{N}}$ is a purification of $\psi_{u_{\mathcal{A}}}^{E^{N}}$, defined
as%
\begin{multline*}
\left\vert \psi_{u_{\mathcal{A}}}\right\rangle ^{B^{N}E^{N}}=\\
\sum_{u_{\mathcal{Y}},u_{\mathcal{X}}}\frac{1}{\sqrt{2^{\left\vert
\mathcal{Y}\right\vert +\left\vert \mathcal{X}\right\vert }}}e^{i\delta
_{u_{\mathcal{Y}}}}\left(  \left\vert \psi_{u_{\mathcal{A}},u_{\mathcal{B}%
},u_{\mathcal{Y}},u_{\mathcal{X}}}\right\rangle ^{B^{N}E^{N}}\left\vert
u_{\mathcal{X}}\right\rangle \left\vert u_{\mathcal{Y}}\right\rangle
^{\widehat{B}}\right)  .
\end{multline*}
Thus, Bob's final task is to apply the controlled unitary on his state%
\[
\sum_{u_{\mathcal{A}}}\left\vert u_{\mathcal{A}}\right\rangle \left\langle
u_{\mathcal{A}}\right\vert ^{\widehat{B}}\otimes U_{u_{\mathcal{A}}},
\]
leading to the following state%
\begin{multline*}
\frac{1}{\sqrt{2^{\left\vert \mathcal{A}\right\vert }}}\sum_{u_{\mathcal{A}}%
}\left\vert u_{\mathcal{A}}\right\rangle \left\vert u_{\mathcal{A}%
}\right\rangle ^{\widehat{B}}U_{u_{\mathcal{A}}}\sum_{u_{\mathcal{Y}%
},u_{\mathcal{X}}}\frac{1}{\sqrt{2^{\left\vert \mathcal{Y}\right\vert
+\left\vert \mathcal{X}\right\vert }}}e^{i\delta_{u_{\mathcal{Y}}}}\times\\
\left(  \left\vert \psi_{u_{\mathcal{A}},u_{\mathcal{B}},u_{\mathcal{Y}%
},u_{\mathcal{X}}}\right\rangle ^{B^{N}E^{N}}\left\vert u_{\mathcal{X}%
}\right\rangle \left\vert u_{\mathcal{Y}}\right\rangle ^{\widehat{B}}\right)
.
\end{multline*}
The desired, ideal state is as follows:%
\[
\frac{1}{\sqrt{2^{\left\vert \mathcal{A}\right\vert }}}\sum_{u_{\mathcal{A}}%
}\left\vert u_{\mathcal{A}}\right\rangle \left\vert \psi\right\rangle
^{B^{N}E^{N}}\left\vert u_{\mathcal{A}}\right\rangle ^{\widehat{B}},
\]
and, after a straightforward calculation, the overlap between the actual state
and the desired state is equal to%
\begin{align*}
& \frac{1}{2^{\left\vert \mathcal{A}\right\vert }}\sum_{u_{\mathcal{A}}%
}F(U_{u_{\mathcal{A}}}\psi_{u_{\mathcal{A}}}^{B^{N}E^{N}}U_{u_{\mathcal{A}}%
}^{\dag},\psi^{B^{N}E^{N}})\\
& =\frac{1}{2^{\left\vert \mathcal{A}\right\vert }}\sum_{u_{\mathcal{A}}%
}F(\psi_{u_{\mathcal{A}}}^{E^{N}},\psi^{E^{N}})\\
& =1-o\left(  2^{-\frac{1}{4}N^{\beta}}\right)  .
\end{align*}
Bob's final move is to discard his register in $B^{N}$, leading to a state
close to the following entangled state shared between Alice and Bob:%
\[
\frac{1}{\sqrt{2^{\left\vert \mathcal{A}\right\vert }}}\sum_{u_{\mathcal{A}}%
}\left\vert u_{\mathcal{A}}\right\rangle \left\vert u_{\mathcal{A}%
}\right\rangle ^{\widehat{B}}%
\]
Putting everything together, we have a scheme that generates entanglement with
a fidelity equal to%
\[
1-o\left(  2^{-\frac{1}{4}N^{\beta}}\right)  ,
\]
where the error results from two parts in the protocol:\ the first error is
from the reliability of the quantum wiretap polar code and the second is from
the strong security of the quantum wiretap polar code.
\end{proof}

The above scheme achieves the symmetric coherent information rate and has an
efficient encoder, but it is unclear to us right now how to improve the
efficiency of the two-step decoder. If one were able to improve the efficiency
of the quantum successive cancellation decoder, this would lead to an
efficient implementation of the first part of the decoding. Improving the
efficiency of the second part might be possible by taking into account the
particular structure of these quantum polar codes.

As a final point, we should note that the symmetric coherent information is
equal to the quantum capacity of the quantum erasure channel, the quantum dephasing channel,
and the universal cloning machine channel. This is not the case for the amplitude damping channel
and the photon-detected jump channel, but the ratio between the true quantum capacity
and the symmetric coherent information is close to one for these channels (see Figure~\ref{fig:q-cap-ratio}).
\begin{figure}[ptb]
\begin{center}
\includegraphics[
width=3.4in
]{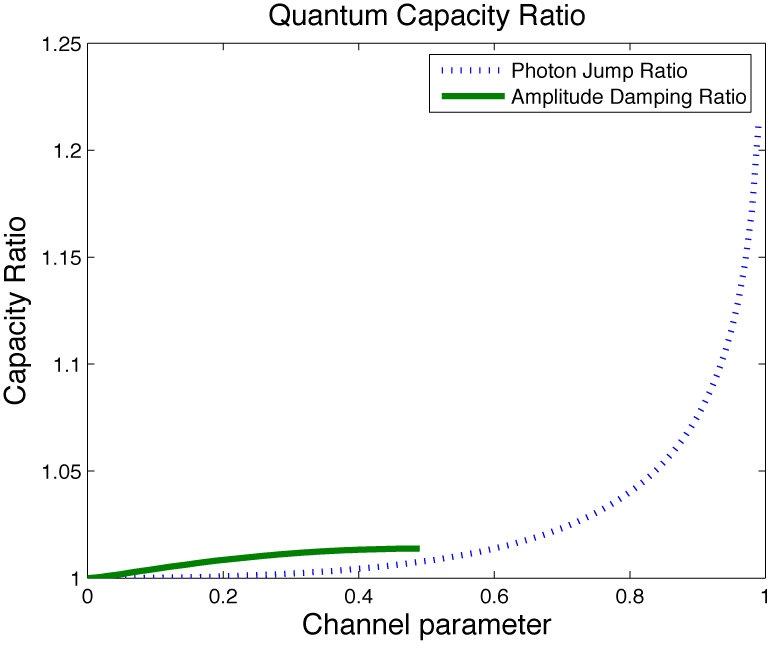}
\end{center}
\caption{The figure plots the ratio of the true quantum capacity to the symmetric
coherent information as a function of the channel parameter for both the amplitude damping channel and the photon-detected jump channel (see Appendix~\ref{app:channel-examples} for definitions of these channels). Observe that the ratio is close to one for many values of the channel parameters.}%
\label{fig:q-cap-ratio}%
\end{figure}

\section{Conclusion}

We constructed polar codes for transmitting classical data privately over a
quantum wiretap channel and for transmitting quantum data over a quantum
channel. The rates achievable are respectively equal to the symmetric private
information and the symmetric coherent information. In both settings, we
require that the channel to the environment be effectively classical in order
to guarantee that the rates are as claimed. The codes exploit the channel
polarization phenomenon, first observed by Arikan in the classical setting
\cite{A09}\ and later in Ref.~\cite{WG11}\ for classical-quantum channels.

Two of the most important problems left open in this paper
are to determine an efficient quantum decoder and to find
other channels outside of the
class discussed here for which the symmetric coherent information is achievable.
Progress on the first question is outlined in Ref.~\cite{WLH13}, though the main question
of an efficient decoder still remains open. The recent work of Wilde and Renes resolves the
second question \cite{WR12,WR12a,RW12}, by adapting an earlier coding scheme
of Renes and Boileau \cite{RB08} to
the polar coding setting.

One might question whether we should call the codes developed here and in
Ref.~\cite{RDR11}\ \textquotedblleft quantum polar codes,\textquotedblright%
\ if the criterion for a quantum polar code is that it be symmetric
capacity-achieving, channel-adapted, and possessing efficient encoders and
decoders. The codes constructed here are symmetric capacity-achieving and
channel-adapted, but do not have efficient decoders. The codes from
Ref.~\cite{RDR11} have efficient encoders and decoders, but they are
capacity-achieving and channel-adapted only for Pauli channels. Though,
both constructions certainly take advantage of the channel polarization effect,
which gives the code construction its name. Solving the
open problems posed here would certainly lead to quantum polar codes that
possess all desiderata.

\section*{Acknowledgements}

MMW acknowledges financial support from the MDEIE (Qu\'{e}bec) PSR-SIIRI
international collaboration grant. SG was supported by the DARPA Information
in a Photon (InPho) program under contract number HR0011-10-C-0159. The
authors are grateful to S.~Hamed Hassani for helpful discussions concerning
Theorem~3 in Ref.~\cite{HU10} and Omar Fawzi and Patrick Hayden for useful discussions.

\appendices

\section{Fidelity of pure state cq channels remains invariant under channel
combining}

\label{app:pure-state-combining}In the theorem below, we demonstrate that the
fidelity for the \textquotedblleft worse\textquotedblright\ channel $W^{-}$
\cite{WG11} is equal to the fidelity of the original channel $W$ if the
original channel is a classical-quantum state with pure state outputs. The
implication is that the inequality in (\ref{eq:critical-inequality}) fails for
this case.

\begin{theorem}
Suppose we have two pure states $\left\vert \psi_{0}\right\rangle $ and
$\left\vert \psi_{1}\right\rangle $ such that the classical-quantum channel
$W$ outputs $\left\vert \psi_{0}\right\rangle $ if zero is input and
$\left\vert \psi_{1}\right\rangle $ if one is input. The fidelity between
these two pure states is as follows:%
\[
F\left(  \left\vert \psi_{0}\right\rangle ,\left\vert \psi_{1}\right\rangle
\right)  =\left\vert \left\langle \psi_{0}|\psi_{1}\right\rangle \right\vert
^{2}.
\]
The states arising from channel combining in the worse direction $W^{-}$
\cite{WG11} are as follows:%
\begin{align*}
\rho_{0}^{-}  &  =\frac{1}{2}\left(  \left\vert \psi_{0}\right\rangle
\left\langle \psi_{0}\right\vert \otimes\left\vert \psi_{0}\right\rangle
\left\langle \psi_{0}\right\vert +\left\vert \psi_{1}\right\rangle
\left\langle \psi_{1}\right\vert \otimes\left\vert \psi_{1}\right\rangle
\left\langle \psi_{1}\right\vert \right)  ,\\
\rho_{1}^{-}  &  =\frac{1}{2}\left(  \left\vert \psi_{0}\right\rangle
\left\langle \psi_{0}\right\vert \otimes\left\vert \psi_{1}\right\rangle
\left\langle \psi_{1}\right\vert +\left\vert \psi_{1}\right\rangle
\left\langle \psi_{1}\right\vert \otimes\left\vert \psi_{0}\right\rangle
\left\langle \psi_{0}\right\vert \right)  .
\end{align*}
Then the following relation holds%
\[
F\left(  W\right)  =F\left(  W^{-}\right)  .
\]

\end{theorem}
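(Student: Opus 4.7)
The plan is to compute $F(W^-) = \|\sqrt{\rho_0^-}\sqrt{\rho_1^-}\|_1^2$ directly by diagonalizing the two density operators and exploiting a striking structural cancellation.

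First I would reduce to the case of a real overlap. Let $\alpha \equiv \langle \psi_0 | \psi_1 \rangle$. Absorbing a phase into $|\psi_1\rangle$ leaves $\rho_0^-$ and $\rho_1^-$ invariant (the phase cancels in each rank-one projector) and also leaves $|\alpha|^2 = F(W)$ unchanged, so I assume WLOG that $\alpha \in [0,1]$. I will also assume $\alpha \in (0,1)$, as the endpoints can be checked separately by continuity.

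Next I would diagonalize each operator on its 2D support. The support of $\rho_0^-$ is spanned by the non-orthogonal vectors $|\psi_0\rangle|\psi_0\rangle$ and $|\psi_1\rangle|\psi_1\rangle$, whose Gram matrix has diagonal entries $1$ and off-diagonal entries $\alpha^2$. Diagonalizing the $2\times 2$ Gram matrix (its eigenvectors are the symmetric and antisymmetric combinations), I obtain
\begin{equation*}
\rho_0^- = \tfrac{1+\alpha^2}{2}\,|e_+\rangle\langle e_+| + \tfrac{1-\alpha^2}{2}\,|e_-\rangle\langle e_-|,
\end{equation*}
where $|e_\pm\rangle \equiv (|\psi_0\psi_0\rangle \pm |\psi_1\psi_1\rangle)/\sqrt{2(1\pm\alpha^2)}$. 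By the identical argument applied to the vectors $|\psi_0\rangle|\psi_1\rangle$ and $|\psi_1\rangle|\psi_0\rangle$ (whose Gram matrix also has off-diagonal entry $|\alpha|^2 = \alpha^2$), I get a spectral decomposition of $\rho_1^-$ with the same spectrum and eigenvectors $|f_\pm\rangle \equiv (|\psi_0\psi_1\rangle \pm |\psi_1\psi_0\rangle)/\sqrt{2(1\pm\alpha^2)}$.

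Now I would compute the four inner products $\langle e_s | f_t\rangle$ for $s,t \in \{+,-\}$ using that each of $\langle \psi_i \psi_j | \psi_k \psi_l\rangle$ equals $\alpha$ whenever $|\{i,j\}\triangle\{k,l\}| = 1$ (i.e.\ the strings differ in exactly one position). A one-line check gives
\begin{equation*}
\langle e_+|f_+\rangle = \frac{2\alpha}{1+\alpha^2}, \qquad \langle e_+|f_-\rangle = \langle e_-|f_+\rangle = \langle e_-|f_-\rangle = 0,
\end{equation*}
the three zeros arising from sign cancellation among the four $\alpha$'s. This is the one step to be careful with, and it is where the "pure-state" nature of the construction is essential; in the general mixed-state case these cross overlaps would not vanish.

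Finally I would assemble the answer. Multiplying the spectral decompositions and using the three vanishing overlaps, all four cross terms in $\sqrt{\rho_0^-}\sqrt{\rho_1^-}$ collapse, leaving the rank-one operator
\begin{equation*}
\sqrt{\rho_0^-}\sqrt{\rho_1^-} = \tfrac{1+\alpha^2}{2}\cdot\tfrac{2\alpha}{1+\alpha^2}\,|e_+\rangle\langle f_+| = \alpha\,|e_+\rangle\langle f_+|.
\end{equation*}
Its trace norm is $\alpha$, so $F(W^-) = \alpha^2 = F(W)$, as claimed. The only real subtlety is the cancellation step above; the rest is mechanical Gram-matrix diagonalization.
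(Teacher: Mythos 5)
Your proof is correct, and it takes a genuinely different route from the paper's. The paper invokes Uhlmann's theorem: since both $\rho_0^-$ and $\rho_1^-$ are rank two, it fixes one purification, parametrizes the other by a two-dimensional purifying register, and maximizes the squared overlap over that register. Your approach instead diagonalizes each density operator directly on its two-dimensional support (via the Gram matrix of the non-orthogonal spanning vectors) and computes $\sqrt{\rho_0^-}\sqrt{\rho_1^-}$ explicitly. The key observation you make---that three of the four cross overlaps $\langle e_s|f_t\rangle$ vanish by sign cancellation, leaving a rank-one operator $\alpha\,|e_+\rangle\langle f_+|$ whose trace norm is immediate---is the structural heart of the result and is arguably more illuminating than the Uhlmann-optimization route, since it pinpoints exactly why the fidelity fails to decrease under channel combining in the pure-state case. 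The paper's proof, by contrast, is a more routine maximization that reaches $\tfrac12|\langle\psi_0|\psi_1\rangle|^2+\tfrac12|\langle\psi_0|\psi_1\rangle|^2$ after trigonometric bounding. Both are sound; yours is cleaner as a computation, while the paper's is closer to the textbook fidelity toolkit. (One small housekeeping point you already flag: the phase reduction to $\alpha\in[0,1]$ is justified because rephasing $|\psi_1\rangle$ leaves each rank-one projector, and hence both $\rho_0^-$ and $\rho_1^-$, unchanged; and the endpoint cases $\alpha\in\{0,1\}$ are trivial since the supports are then orthogonal or identical, respectively.)
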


\begin{proof}
Recall that the Uhlmann fidelity $F\left(  \rho_{0}^{-},\rho_{1}^{-}\right)
$\ is equal to the maximum squared overlap between purifications of $\rho
_{0}^{-}$ and $\rho_{1}^{-}$, where the maximum is over all purifications.
Since both of the above states are rank two, it suffices to consider a
two-dimensional purifying system for both. Furthermore, we can fix one
purification while varying the other one. So, one purification of $\rho
_{0}^{-}$ is as follows:%
\[
|\phi_{\rho_{0}^{-}}\rangle=\frac{1}{\sqrt{2}}\left(  \left\vert \psi
_{0}\right\rangle \left\vert \psi_{0}\right\rangle \left\vert 0\right\rangle
+\left\vert \psi_{1}\right\rangle \left\vert \psi_{1}\right\rangle \left\vert
1\right\rangle \right)  ,
\]
and a varying purification of $\rho_{1}^{-}$\ is as follows:%
\[
|\phi_{\rho_{1}^{-}}\rangle=\frac{1}{\sqrt{2}}\left(  \left\vert \psi
_{0}\right\rangle \left\vert \psi_{1}\right\rangle \left\vert \varphi
\right\rangle +\left\vert \psi_{1}\right\rangle \left\vert \psi_{0}%
\right\rangle \left\vert \varphi^{\perp}\right\rangle \right)  ,
\]
where we can set%
\begin{align*}
\left\vert \varphi\right\rangle  &  =\alpha\left\vert 0\right\rangle
+\beta\left\vert 1\right\rangle ,\\
|\varphi^{\perp}\rangle &  =\beta^{\ast}\left\vert 0\right\rangle
-\alpha^{\ast}\left\vert 1\right\rangle .
\end{align*}
Our goal is to maximize the overlap $\left\vert \langle\phi_{\rho_{0}^{-}%
}|\phi_{\rho_{1}^{-}}\rangle\right\vert ^{2}$\ over all legitimate choices of
$\alpha$ and $\beta$. The overlap $\left\vert \langle\phi_{\rho_{0}^{-}}%
|\phi_{\rho_{1}^{-}}\rangle\right\vert ^{2}$ is as follows:%
\begin{align*}
&  \left\vert \langle\phi_{\rho_{0}^{-}}|\phi_{\rho_{1}^{-}}\rangle\right\vert
^{2}\\
&  =\frac{1}{4}\left\vert
\begin{array}
[c]{c}%
\left\langle \psi_{0}|\psi_{0}\right\rangle \left\langle \psi_{0}|\psi
_{1}\right\rangle \left\langle 0|\varphi\right\rangle +\left\langle \psi
_{1}|\psi_{0}\right\rangle \left\langle \psi_{1}|\psi_{1}\right\rangle
\left\langle 1|\varphi\right\rangle +\\
\left\langle \psi_{0}|\psi_{1}\right\rangle \left\langle \psi_{0}|\psi
_{0}\right\rangle \left\langle 0|\varphi^{\perp}\right\rangle +\left\langle
\psi_{1}|\psi_{1}\right\rangle \left\langle \psi_{1}|\psi_{0}\right\rangle
\left\langle 1|\varphi^{\perp}\right\rangle
\end{array}
\right\vert ^{2}\\
&  =\frac{1}{4}\left\vert
\begin{array}
[c]{c}%
\left\langle \psi_{0}|\psi_{1}\right\rangle \alpha+\left\langle \psi_{1}%
|\psi_{0}\right\rangle \beta+\\
\left\langle \psi_{0}|\psi_{1}\right\rangle \beta^{\ast}-\left\langle \psi
_{1}|\psi_{0}\right\rangle \alpha^{\ast}%
\end{array}
\right\vert ^{2}\\
&  =\frac{1}{4}\left\vert \left\langle \psi_{0}|\psi_{1}\right\rangle \left(
\alpha+\beta^{\ast}\right)  +\left\langle \psi_{1}|\psi_{0}\right\rangle
\left(  \beta-\alpha^{\ast}\right)  \right\vert ^{2}\\
&  =\frac{1}{4}\left\vert \left\langle \psi_{0}|\psi_{1}\right\rangle \left(
\alpha+\beta^{\ast}\right)  +\left\langle \psi_{1}|\psi_{0}\right\rangle
\left(  \beta-\alpha^{\ast}\right)  \right\vert ^{2}%
\end{align*}%
\begin{align*}
&  =\frac{1}{4}\left\vert \left\langle \psi_{0}|\psi_{1}\right\rangle
\right\vert ^{2}\left\vert \alpha+\beta^{\ast}\right\vert ^{2}+\frac{1}%
{4}\left\vert \left\langle \psi_{0}|\psi_{1}\right\rangle \right\vert
^{2}\left\vert \beta-\alpha^{\ast}\right\vert ^{2}\\
&  \ \ \ +\frac{1}{4}2\operatorname{Re}\left\{  \left\langle \psi_{1}|\psi
_{0}\right\rangle ^{2}\left(  \alpha^{\ast}+\beta\right)  \left(  \beta
-\alpha^{\ast}\right)  \right\}  \\
&  =\frac{1}{4}\left\vert \left\langle \psi_{0}|\psi_{1}\right\rangle
\right\vert ^{2}\left(  \left\vert \alpha+\beta^{\ast}\right\vert
^{2}+\left\vert \beta-\alpha^{\ast}\right\vert ^{2}\right)  \\
&  \ \ \ +\frac{1}{4}2\operatorname{Re}\left\{  \left\langle \psi_{1}|\psi
_{0}\right\rangle ^{2}\left(  \beta^{2}-\left(  \alpha^{\ast}\right)
^{2}\right)  \right\}  \\
&  =\frac{1}{4}\left\vert \left\langle \psi_{0}|\psi_{1}\right\rangle
\right\vert ^{2}(\left\vert \alpha\right\vert ^{2}+\left\vert \beta\right\vert
^{2}+\\
&  \ \ \ \ 2\operatorname{Re}\left\{  \alpha\beta^{\ast}\right\}  +\left\vert
\alpha\right\vert ^{2}+\left\vert \beta\right\vert ^{2}-2\operatorname{Re}%
\left\{  \alpha\beta^{\ast}\right\}  )\\
&  \ \ \ \ +\frac{1}{4}2\operatorname{Re}\left\{  \left\langle \psi_{1}%
|\psi_{0}\right\rangle ^{2}\left(  \beta^{2}-\left(  \alpha^{\ast}\right)
^{2}\right)  \right\}  \\
&  =\frac{1}{2}\left\vert \left\langle \psi_{0}|\psi_{1}\right\rangle
\right\vert ^{2}+\frac{1}{2}\operatorname{Re}\left\{  \left\langle \psi
_{1}|\psi_{0}\right\rangle ^{2}\left(  \beta^{2}-\left(  \alpha^{\ast}\right)
^{2}\right)  \right\}
\end{align*}
We set $\left\langle \psi_{1}|\psi_{0}\right\rangle =r_{\psi}e^{i\phi}$ (with
$r_{\psi}=\left\vert \left\langle \psi_{1}|\psi_{0}\right\rangle \right\vert
$) so that we have%
\begin{align*}
&  \frac{1}{2}\operatorname{Re}\left\{  \left\langle \psi_{1}|\psi
_{0}\right\rangle ^{2}\left(  \beta^{2}-\left(  \alpha^{\ast}\right)
^{2}\right)  \right\}  \\
&  =\frac{1}{2}\operatorname{Re}\left\{  r_{\psi}^{2}e^{i2\phi}\left(
r_{1}^{2}e^{i2\theta_{1}}-r_{2}^{2}e^{i2\theta_{2}}\right)  \right\}  \\
&  =\frac{1}{2}r_{\psi}^{2}\operatorname{Re}\left\{  r_{1}^{2}e^{i2\left(
\theta_{1}+\phi\right)  }-r_{2}^{2}e^{i2\left(  \theta_{2}+\phi\right)
}\right\}  \\
&  =\frac{1}{2}\left\vert \left\langle \psi_{1}|\psi_{0}\right\rangle
\right\vert ^{2}\left(  r_{1}^{2}\cos\left(  2\left(  \theta_{1}+\phi\right)
\right)  -r_{2}^{2}\cos\left(  2\left(  \theta_{2}+\phi\right)  \right)
\right)  \\
&  \leq\frac{1}{2}\left\vert \left\langle \psi_{1}|\psi_{0}\right\rangle
\right\vert ^{2}\left(  r_{1}^{2}+r_{2}^{2}\right)  \\
&  =\frac{1}{2}\left\vert \left\langle \psi_{1}|\psi_{0}\right\rangle
\right\vert ^{2}%
\end{align*}
By choosing an appropriate $\alpha$ and $\beta$ such that $\cos\left(
2\left(  \theta_{1}+\phi\right)  \right)  =1$ and $\cos\left(  2\left(
\theta_{2}+\phi\right)  \right)  =-1$, this demonstrates that the result holds
for any pure states $\left\vert \psi_{0}\right\rangle $ and $\left\vert
\psi_{1}\right\rangle $.
\end{proof}

\section{Examples of channels with classical environment}

\label{app:channel-examples}We prove in this section that several important
degradable channels have a classical environment, in the sense that the
classical-quantum channel induced by inputting classical orthonormal states at
the input leads to commuting output states. That is, we would like to prove
that $\left[  \rho_{0}^{E},\rho_{1}^{E}\right]  =0$ for several important
channels, where%
\[
\rho_{x}^{E}=W^{\ast}\left(  \left\vert x\right\rangle \left\langle
x\right\vert \right)  ,
\]
$W^{\ast}$ is the complementary channel, and $\left\{  \left\vert
x\right\rangle \right\}  $ is some orthonormal basis.

We begin with the amplitude damping channel. The complement of an amplitude
damping channel with damping parameter $\gamma$ has the following action on a
qubit input \cite{W11}:%
\[%
\begin{bmatrix}
1-p & \eta^{\ast}\\
\eta & p
\end{bmatrix}
\rightarrow%
\begin{bmatrix}
1-\gamma p & \sqrt{\gamma}\eta^{\ast}\\
\sqrt{\gamma}\eta & \gamma p
\end{bmatrix}
,
\]
where $0\leq p,\gamma\leq1$, $\eta$ is a complex number such that the input
matrix is positive, and the matrix representations are with respect to the
computational basis $\left\{  \left\vert 0\right\rangle ,\left\vert
1\right\rangle \right\}  $. (The complement is effectively an amplitude
damping channel with damping parameter $1-\gamma$.) The result follows by
observing that%
\begin{align*}
\left\vert 0\right\rangle \left\langle 0\right\vert  &  \rightarrow\left\vert
0\right\rangle \left\langle 0\right\vert ,\\
\left\vert 1\right\rangle \left\langle 1\right\vert  &  \rightarrow\left(
1-\gamma\right)  \left\vert 0\right\rangle \left\langle 0\right\vert
+\gamma\left\vert 1\right\rangle \left\langle 1\right\vert .
\end{align*}

Consider the photon-detected jump channel from Refs.~\cite{ABCDGM01,GJWZ10}.
The authors of Ref.~\cite{GJWZ10}\ demonstrated that the complement of this
channel is as follows:%
\[%
\begin{bmatrix}
1-p & \eta^{\ast}\\
\eta & p
\end{bmatrix}
\rightarrow\left(  1-\gamma p\right)  \left\vert 0\right\rangle \left\langle
0\right\vert ^{E}+\gamma p\left\vert 1\right\rangle \left\langle 1\right\vert
^{E}.
\]
So it is again clear that the computational basis suffices to make the
environment outputs commute.

The complement of an erasure channel with erasure parameter $\epsilon$ is just
an erasure channel with erasure parameter $1-\epsilon$ \cite{W11}:%
\[
\rho\rightarrow\epsilon\rho+\left(  1-\epsilon\right)  \left\vert
e\right\rangle \left\langle e\right\vert ,
\]
where $\left\vert e\right\rangle $ is some erasure symbol orthogonal to the
space of $\rho$. Thus, any basis suffices to demonstrate that the states for
the environment commute.

A qubit dephasing channel with parameter $p$ has the following form:%
\[
\rho\rightarrow\left(  1-p\right)  \rho+p\ \sigma_{i}\rho\sigma_{i},
\]
where $\sigma_{i}$ is one of the Pauli operators. The complement of this
channel has the following form for a pure state input $\left\vert
\psi\right\rangle $ \cite{W11}:%
\begin{multline*}
\left(  1-p\right)  \left\vert 0\right\rangle \left\langle 0\right\vert +\\
\sqrt{p\left(  1-p\right)  }\left\langle \psi\right\vert \sigma_{i}\left\vert
\psi\right\rangle \left(  \left\vert 0\right\rangle \left\langle 1\right\vert
+\left\vert 1\right\rangle \left\langle 0\right\vert \right)  +\\
p\left\vert 1\right\rangle \left\langle 1\right\vert .
\end{multline*}
Thus, choosing the input basis to be the one for which $\sigma_{i}$ acts as a
\textquotedblleft bit-flipping\textquotedblright\ operator leads to commuting
outputs for the environment. For example, if $\sigma_{i}=X$, then the basis is
$\left\{  \left\vert 0\right\rangle ,\left\vert 1\right\rangle \right\}  $,
while if $\sigma_{i}=Z$, the basis is $\left\{  \left\vert +\right\rangle
,\left\vert -\right\rangle \right\}  $.

Finally, the complement of the cloning channel is available in
Ref.~\cite{GJWZ10}. Due to the covariance of the cloning channel and its
complement, inputting any orthonormal basis leads to the following states on
the output:%
\begin{align}
\psi_{0}^{E}  &  =\sum_{i=0}^{N-1}\frac{i+1}{\Delta_{N}}|i\rangle\!\langle
i|^{E},\\
\psi_{1}^{E}  &  =\sum_{i=0}^{N-1}\frac{i+1}{\Delta_{N}}|N-1-i\rangle\!\langle
N-1-i|^{E},
\end{align}
where $N$ is the number of clones, $\Delta_{N}=N\left(  N+1\right)  /2$, and
the states $\{\left\vert i\right\rangle ^{E}\}$ form an orthonormal basis.
Thus, the environmental states commute for these channels.

\bibliographystyle{plain}
\bibliography{Ref}

\begin{biography}{Mark M. Wilde}(M'99) was born in Metairie, Louisiana, USA. He received the Ph.D. degree in electrical engineering from the University of Southern California, Los Angeles, California, in 2008. He is a Postdoctoral Fellow at the School of Computer Science, McGill University and will start in August 2013 as an Assistant Professor in the Department of Physics and Astronomy and the Center for Computation and Technology at Louisiana State University. His current research interests are in quantum Shannon theory, quantum optical communication, quantum computational complexity theory, and quantum error correction.\end{biography}

\begin{biography}{Saikat Guha} was born in Patna, India, on July 3, 1980. He received the Ph.D. degree in Electrical Engineering and Computer Science from the Massachusetts Institute of Technology (MIT), Cambridge, MA in 2008. He is currently a Senior Scientist with Raytheon BBN Technologies, Cambridge, MA, USA. His current research interest surrounds the application of quantum information and estimation theory to fundamental limits of optical communication and imaging. He is also interested in classical and quantum error correction, network information and communication theory, and quantum algorithms.\end{biography}

\end{document}